\newcommand{\Iin}{{\ensuremath{I_{\rm in}}}}
\newcommand{\dd}{\ensuremath{\textrm{d}}}
\newcommand{\DU}[2]{\ensuremath{_{#1}^{\phantom{#1} #2}}}
\newcommand{\RealNum}{\ensuremath{{\bf R}}}
\newcommand{\beq}{\begin{equation}}
\newcommand{\eeq}{\end{equation}}
\newcommand{\bea}{\begin{eqnarray}}
\newcommand{\eea}{\end{eqnarray}}
\newcommand{\bit}{\begin{itemize}}
\newcommand{\eit}{\end{itemize}}
\newcommand{\bfi}{\begin{figure}}
\newcommand{\efi}{\end{figure}}
\newcommand{\bfic}{\begin{figure*}}
\newcommand{\efic}{\end{figure*}}
\newcommand{\bce}{\begin{center}}
\newcommand{\ece}{\end{center}}
\newcommand{\bt}{\begin{table}}
\newcommand{\et}{\end{table}}
\newcommand{\btb}{\begin{tabular}}
\newcommand{\etb}{\end{tabular}}
\newcommand{\eff}{\ensuremath{{\it eff}}}
\newcommand{\tot}{\ensuremath{{\it tot}}}
\newcommand{\Mean}[1]{\ensuremath{\left\langle #1 \right\rangle}}
\newcommand{\eeff}[1]{\ensuremath{ #1_{\it eff}}}
\newcommand{\BarPhi}{\Mean{\Phi}}
\newcommand{\disc}{\ensuremath{{\rm disc}}}
\newcommand{\diam}{\ensuremath{{\rm diam}}}
\newcommand{\vol}{\ensuremath{\textrm{vol}}}
\newcommand{\Rr}{\ensuremath{\textbf{R}}}
\newtheorem{theorem}{Theorem}[section]
\newtheorem{lemma}[theorem]{Lemma}
\newenvironment{proof}[1][Proof]{\begin{trivlist}
\item[\hskip \labelsep {\bfseries #1}]}{\end{trivlist}}
\newenvironment{remark}[1][Remark]{\begin{trivlist}
\item[\hskip \labelsep {\bfseries #1}]}{\end{trivlist}}
\newcommand{\qed}{\nobreak \ifvmode \relax \else
      \ifdim\lastskip<1.5em \hskip-\lastskip
      \hskip1.5em plus0em minus0.5em \fi \nobreak
      \vrule height0.75em width0.5em depth0.25em\fi}
\begin{document}

\title{Backreaction and  continuum limit in a closed universe filled with black holes}

\author{Miko\l{}aj Korzy\'nski}
\address{
Center for Theoretical Physics, Polish Academy of Sciences \\
Al. Lotnik\'ow 32/46 \\
02-668 Warsaw \\
Poland}
\ead{korzynski@cft.edu.pl}

\begin{abstract}
We discuss the continuum limit of the initial data for a vacuum, closed cosmological model with black holes as the only sources of the gravitational field. The model we consider is an exact solution of the constraint equations and represents a vacuum universe with a number of black holes placed on a spatial slice of $S^3$ topology considered at the moment of its largest expansion when the black holes are momentary at rest. 
We explain how and under what conditions the FLRW metric arises as the continuum limit when the number of black holes contained in the model goes to infinity. We also discuss the
relation between the effective cosmological parameters of the model, inferred from the large scale geometry of the spacetime, and the masses of individual black holes. In particular, we prove an estimate for the difference between the total effective mass of the system and the sum of the masses of all black holes, thus quantifying  the effects of the inhomogeneities in the matter distribution or the cosmological backreaction.
\end{abstract}

 \section{Introduction}

 The Einstein's field equations relate the Einstein tensor, a part of the curvature of the spacetime metric, to the stress--energy tensor of the matter. They are non--linear and covariant, the latter feature being the expression of the coordinate system invariance of gravity. 
In astrophysical applications of general relativity we often need to solve them in a situation in which the matter distribution has  a complicated, lumpy or even discrete structure on small (microscopic) scales, but a simpler and more uniform one on a larger (macroscopic) scale. The standard method of dealing with this problem  is to apply the Einstein's equations to an idealized large--scale metric, called the background, which is supposed to represent correctly the properties of the spacetime on the macroscopic scale. It is assumed that the influence of the individual microscopic sources has been removed via an averaging or coarse--graining procedure. The averaging scale is understood to be mesoscopic, i.e. to lie somewhere between the micro-- and macroscopic. The stress--energy tensor on the right hand side of the  equations is then assumed to be  the simple sum  of contributions from all small--scale sources located within a mesoscopic volume. 

The main advantage of the  procedure outlined above is that we effectively replace a complicated, discrete structure of matter and the metric tensor with a simpler, continuous one. This is obviously an approximation which can be justified as follows: assuming that the Einstein's equations of gravity hold exactly at smallest scales we can perform the coarse--graining of the metric tensor $g_{\mu\nu}$ and substitute the resulting background metric
$g^{\eff}_{\mu\nu}$ to the Einstein equations. Due to the non--linear nature of the Einstein's equations and possibly of the coarse--graining procedure the resulting effective stress--energy tensor $T^{\eff}_{\mu\nu}$ will in general differ from the effect of the coarse--graining of the local stress--energy tensor $\Mean{T_{\mu\nu}}$ \cite{Ellis:2011}. 
If  the microscopic scale is small enough, we may expect the non--linear effects of GR to play little role at that scale and below and  
therefore  $\Mean{T_{\mu\nu}}$ will essentially be the sum over the local contributions from small--scale sources. The difference between $\Mean{T_{\mu\nu}}$ and  $T^{\eff}_{\mu\nu}$ is called in this
context the backreaction. The continuum approximation is then based on the assumption that the backreaction term is negligible in comparison to  the effective  stress--energy tensor if the ratio between the two scales is large enough. 

Although  this approximation seems reasonable, it would be very useful to provide its mathematically rigorous justification and, more importantly, draw precisely its limits of applicability. In the recent years the problem has attracted attention
of the researchers in the context of cosmology \cite{Buchert:2011yu, Wiltshire:2013wta, Bolejko:2008yj, Ishibashi:2005sj, Green:2010qy, Brown:2013usa, Coley:2010hs, Zalaletdinov:2008ts, Wiltshire:2011vy, Andersson:2013uaa, Rasanen:2009uw, Brown:2009tg, Reiris:2007gb, Korzynski:2009db}. Obviously in the cosmological setting we assume the existence of a homogeneity scale, which may be as large as 2000Mpc \cite{Horvath:2013kwa},   and a complicated, multiscale structure below that. The  problem of backreaction takes  the form of the question of applicability of the Friedmann equations with ordinary matter content to the observational cosmology. In particular, the big question is whether or not the backreaction terms might mimic the dark energy whose existence is inferred from, among other things, the supernovae  observations.

The general problem of justification of the continuum approximation can be broken into several smaller, related problems. In this paper we will be concerned with two of those:

\emph{The continuum limit problem:} Assume that the ratio between the macroscopic scale and the microscopic one goes to infinity. This means that the size of individual lumps goes to 0 when compared to the typical macroscopic scale. Under what conditions does the physical metric tensor approach the coarse--grained, background metric and in what mathematical sense? Obviously we should not expect the convergence to occur near very compact matter sources, where the metric is strongly influenced by their presence, but rather in the faraway region.

\emph{The backreaction problem:} Can we evaluate the backreaction effects given the macroscopic metric and matter distribution and the statistical properties of the distribution of microscopic sources? In particular, can we say under what conditions the backreaction effects vanish as we pass to the continuum limit in the sense described above? The problem is especially interesting and difficult if the small--scale sources do not contain any ordinary matter themselves, but rather take the form vacuum black hole solutions of Schwarzschild or Kerr-like type. The spacetime is vacuum everywhere in this case, so the average of the local $T_{\mu\nu}$ vanishes  and the only contributions to the 
macroscopic stress-energy tensor may come from summing over the discrete sources. The continuum limit corresponds to the limit of the number of black holes going to infinity.

These problems are very hard to solve if we approach them in full generality and it is necessary to make simplifications here.  One possibility is
to consider only the initial data for the Einstein's equations, neglecting the dynamics completely. 
Recall that the initial data in the form of a manifold $M^3$ with a Riemannian metric $q_{ij}$ and a symmetric tensor $K^{ij}$
are related to 4 of 10 independent components of the stress--energy tensor via the Hamiltonian and vector constraint equations \cite{Baumgarte:1998te}:
\bea
 R + K^2 - K_{ij}\,K^{ij} &=& 16\pi T_{nn} \nonumber\\
 D_i K^{ij} - D^j K &=& 8\pi T^{i}_n, \nonumber
\eea 
$n$ denoting the normal component with respect  to the constant time 3--surface. 
We can perform the coarse--graining at the level of the initial data starting from $q_{ij}$ and $K^{ij}$, obtaining the effective
energy density $T_{nn}$ and momentum density $T_{ni}$. Note that the rest of the stress energy tensor, and consequently the rest of the backreaction, is inaccessible without considering the dynamics of the system. However the energy  and momentum densities and 
their corresponding backreaction terms 
are themselves of great physical interest. 
 
Secondly, instead of deriving an exact expression for backreaction we may consider a more modest goal of proving an inequality estimating the backreaction from above. The inequality should of course be useful in the sense that it implies the vanishing of backreaction in the continuum limit under certain physically motivated assumptions, thereby justifying the ordinary continuum approximation.

In this paper we will discuss both problems in a special but physically interesting case of the initial data for a  vacuum 3--sphere with
black holes. This family of time symmetric initial data  is a counterpart of a closed FLRW model at the moment of its largest expansion in which the homogeneous matter distribution has been replaced with  an arbitrary number of Schwarzschild--like black holes located on a 3--sphere. This type of initial data has been studied extensively in the special case of regular 
black hole lattices, i.e. arrangements of
a fixed number of black holes of equal mass placed in such a way that the resulting initial data has the biggest possible discrete  group of isometries  \cite{RevModPhys.29.432, springerlink:10.1007/BF01889418, Bentivegna:2012ei, Clifton:2012qh,Clifton:2013jpa}, see also \cite{Yoo:2012jz, Yoo:2013yea, Bentivegna:2013jta} for  regular configurations constructed on a flat space rather than a sphere. In \cite{Clifton:2012qh} it was noted that as the number of black holes $N$ grows, the metric tensor in the region between the black holes becomes increasingly round. Unfortunately only 6 arrangements of this type are possible, with maximally 600 black holes, so these configurations are insufficient if we are interested in probing the continuum limit for $N\to\infty$. Therefore here we will be concerned with absolutely arbitrary arrangements of black holes on $S^3$. 

The main result of this paper consists of two related inequalities. The first one estimates  the order of convergence to the continuum limit of the initial data described above, while the other one bounds the only backreaction  effect possible here, i.e. the non--additivity of the black hole masses. 
More precisely, the first inequality provides bounds on the difference between the local metric and the properly defined coarse grained metric at any point of the manifold. The latter bounds the difference between the effective total mass of the model, defined via the  Einstein's equations applied to the coarse--grained metric and extrinsic curvature, and the sum of the ADM masses of the black holes. Both bounds depend on certain parameters measuring the properties of the  arrangement
of the black holes, which in this case play the role of the microscopic matter distribution. We also present a construction of an infinite sequence of black hole configurations  for which the two inequalities can be used to prove that the continuum limit exists in a sense we explain in detail  and that asymptotically the backreaction vanishes for large number of black holes.

The proofs of the inequalities constitute the most involved part of the paper. The difficult part in both cases is 
estimating  terms of the form of the difference between the volume average of a function on $S^3$ and the weighted average of its values 
taken over a finite set of points on the 3--sphere. The problem of such estimates has a long history in mathematics,
dating back to H. Weyl \cite{Weyl1916}.  It turned out however that  none of the results we have found in the literature is directly applicable to our case. Therefore our methods of proof, although partially based on known results, are new. 

The paper is organized as follows: in the next section we review the construction of the vacuum initial data on $S^3$ with
black holes and discuss their basic properties. We then propose a way of fitting a closed FLRW model at the moment of 
the largest expansion to the given initial data. This can be thought of as a special method to coarse--grain the inhomogeneous 
spatial metric tensor, yielding always a completely homogeneous one. In Section 3 we state the two main theorems of the paper right after 
introducing the necessary mathematical apparatus and then prove them  in Sections 4 and 5. In Section 6 we discuss the construction of an infinite sequence of black hole configurations in which the black holes cover the sphere in a uniform manner. We prove that for those configurations the continuum limit is achieved in an appropriate sense and the backreaction vanishes as the number of black holes $N$ goes to infinity.
Applying the inequalities proved in the previous section we  discuss the properties of the initial data for very large $N$.
In Section 7 we present the result of numerical investigation of the properties of the 6 regular configurations as well as the configurations
from the sequence constructed in Section 6. We summarize the results and spell out the conclusions in Section 8. In the Appendix we include the proofs of several technical results we have used in the paper.

\section{Static initial data on $S^3$}

Consider a three--dimensional sphere with the standard metric
\bea
q_0 = \dd \lambda^2 + \sin^2\lambda\left(\dd \vartheta^2 + \sin\vartheta^2\,\dd\varphi^2\right).\nonumber
\eea
We construct vacuum initial data using the Lichnerowicz--York prescription (see \cite{lichnerowicz:1944, York:1971hw}), i.e. assume the physical  three-metric $q$ to be related to $q_0$ by a conformal rescaling
\bea
q = \Phi^4\,q_0. \label{eq:conformal}
\eea
We assume the initial data to be time-symmetric
\bea
K_{ij} = 0, \label{eq:Kzero}
\eea
which means that we are looking at this universe at the moment of its largest expansion. The vector constraint equation is satisfied automatically, while the scalar constraint reduces to the Lichnerowicz--York equation for the conformal factor. In this case it turns out to be linear:
\begin{eqnarray}
\left(\Delta - \frac{3}{4}\right) \Phi = 0, \label{eq:ScalarConstraint}
\end{eqnarray}
$\Delta=q_0^{ij} \nabla_i \nabla_j$ denotes here the ''round'' Laplacian on $S^3$.

Equation (\ref{eq:ScalarConstraint}) does not have regular solutions, but it nevertheless has a Green's function
\begin{eqnarray}
\Psi(x) = \frac{1}{\sin\frac{\lambda}{2}} \label{eq:Greens}
\end{eqnarray} 
with a single singularity at $\lambda=0$ where it diverges like $\frac{2}{\lambda}$. We can use $\Psi(x)$ to construct solutions with a finite number of punctures, i.e. singularities located at selected points $x_n$: 
\begin{eqnarray}
\Phi(x) = \sum_{n=1}^N \alpha_n\frac{1}{\sin \frac{\Lambda(x,x_n)}{2}}, \label{eq:PhiDefinition}
\end{eqnarray}
where $\Lambda(\cdot,\cdot)$ denotes the geodesic distance with respect to
 $q_0$. The positive coefficients $\alpha_n$ will be referred to as the mass parameters. It is tempting to identify them with the bare masses of  the punctures, but note that their dimension is $L^{1/2}$ instead of  $L$.
  
 We will denote by $A$  a non--empty set of  pairs  $(x_i, \alpha_i)$ consisting of a puncture and its positive mass parameters. Such sets, defining via (\ref{eq:PhiDefinition}) the conformal factor, will be referred to as configurations. We introduce a notation for the sum of all mass parameters in a given configuration:
 \bea
 \alpha = \sum_{i=1}^{N} \alpha_i.\nonumber
\eea 
Note that equation (\ref{eq:ScalarConstraint}), being linear, can be interpreted distributionally. In this case solution (\ref{eq:PhiDefinition}) 
satisfies  the distributional equation
\begin{eqnarray}
\left(\Delta - \frac{3}{4}\right) \Phi = -8\pi\sum_{n=1}^N \alpha_n\delta(x_n,x), \label{eq:ScalarConstraintDistr}.
\end{eqnarray}
$\delta(x_n,x)$ being the Dirac delta on $S^3$ at $x_n$.

Each of the punctures corresponds to an asymptotically flat region of the spacetime. For $N=1$ the resulting metric $q$ is isometric to the flat space $\RealNum^3$. For $N=2$ it is isometric to the standard $t=0$ section of a Schwarzschild spacetime. For $N \ge 3$ the resulting 3--manifold consists of $N$ asymptotically flat ends connected  through minimal surfaces 
to a central region (see figure \ref{fig:geometry}). In analogy with the Schwarzschild solution we will refer to them as the black holes. The ADM mass measured at each of the asymptotically flat end
$x_n$ is given by
\bea
M_n = 4\alpha_n\,\Phi_{\rm reg}(x_n),\nonumber
\eea
where $\Phi_{\rm reg}(x) = \sum_{m=1,m\neq n}^{N} \alpha_m\frac{1}{\sin \frac{\Lambda(x,x_m)}{2}}$ is the
regular part of $\Phi(x)$ at $x_n$. This is equivalent to the following formula
\bea
M_n = 4\alpha_n\sum_{m=1,m\neq n}^{N} \frac{\alpha_m}{\sin \frac{\Lambda(x_m,x_n)}{2}} \label{eq:ADMmass}.
\eea
The expression is quadratic in $\alpha_i$'s, so its dimension is indeed $L$ as it should be. Note however that it depends not only on $\alpha_i$, but also on mass parameters of all other punctures as well as on their positions on $S^3$. 
\bfi
\bce
\includegraphics[trim=2cm 4cm 2cm 3cm,clip,width=0.45\textwidth]{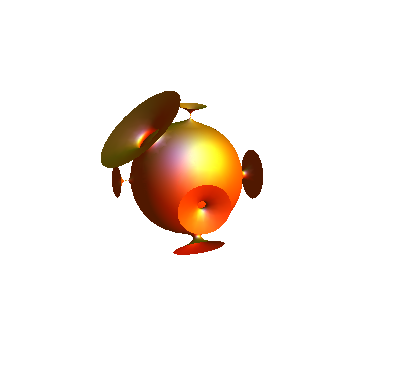}
\caption{Geometry of the closed $S^3$ initial data with 6 black holes. Trumpets correspond to asymptotically flat ends.
\label{fig:geometry}}
\ece
\efi
The sum of ADM masses of  all punctures, which we will call the \emph{total mass} of the configuration, is given by
\bea
M_{ \tot} = \sum_{n=1}^N \sum_{m=1,m\neq n}^{N} \frac{4\alpha_m\alpha_n}{\sin \frac{\Lambda(x_m,x_n)}{2}}.\label{eq:Mtot}
\eea
Just like $M_n$ it depends on the positions of the black holes. 

These initial data have been first discussed by Wheeler and Lindquist in \cite{RevModPhys.29.432} and later by many authors. The research so far  has mostly focused on a special type of configurations called \emph{regular}. These are the maximally symmetric arrangements of identical black holes, forming a finite regular lattice on a $S^3$. If we consider the standard embedding of $S^3$ into $\Rr^4$, the punctures in these configurations are located at the vertices of a
regular polytope (a 4--dimensional generalization of a platonic solid) inscribed into the 3--sphere. It turns out that there are only
6 configurations of this type, with $N=5$, 8, 16, 24, 120 and 600. They will be denoted in this paper by $R_5$, $R_8$, $R_{16}$ $R_{24}$, $R_{120}$ and $R_{600}$ respectively. 

Wheeler described the regular configuration with 5 black holes in detail in \cite{springerlink:10.1007/BF01889418}.
The interest in these models has been revived recently in the context of the backreaction problem: in \cite{Clifton:2012qh} Clifton, Rosquist and Tavakol described in detail and discussed all 6 regular configurations focusing on the backreaction effects. In \cite{Bentivegna:2012ei} the results of numerical evolution of $R_8$  were presented and compared with the closed FLRW model with dust. Finally in \cite{Clifton:2013jpa} Clifton, Gregoris, Tavakol and Rosquist discussed the exact evolution of the regular configurations along certain geometrically distinguished curves, taken to arbitrary large times.

Let us stress here that the linearity of equation (\ref{eq:ScalarConstraint}) and the superficial resemblance of (\ref{eq:ScalarConstraintDistr}) to the Poisson equation for the Newtonian gravitational potential does not mean that we are considering here a simplified, linearized version of general relativity. The pair $(q_{ij}, K^{ij})$, given by the non--linear equation (\ref{eq:conformal}) and equation (\ref{eq:Kzero}), is an exact solution of the full  constraint equations and therefore captures also the non--linear effects of Einstein's theory of gravity. 
This is evident in equations (\ref{eq:ADMmass}) and (\ref{eq:Mtot}), where both the ADM masses and the total mass turn out to be quadratic in the mass parameters. This is unlike the linearized GR or Newtonian gravity, where the mass of a point source depends linearly on the coefficient of the  order of $1/r$ in the expansion of the metric or the Newtonian potential. 
Even more striking is the inherently non--local  nature of both types of mass: although adding a new black hole to the configuration amounts to simply adding 
another term to (\ref{eq:PhiDefinition}) without affecting the rest of $\Phi$, the ADM mass of every other black hole changes because of its presence. This is a clear indication of the non--linear effects of gravity.

\subsection{FLRW fitting and effective parameters}

It is intuitively clear that the initial data described above are the discretized counterpart of the closed dust FLRW model at the moment of its largest expansion. However,
in order to make a direct comparison we need to propose a way to single out one FLRW model whose constant time slice resembles our model in large scales. It is obvious that the FLRW counterpart must be a closed universe ($k=1$) observed at the moment of its largest expansion. Therefore we only need now to measure the effective size of the model, i.e. propose the 
counterpart of the $a(t)$ parameter in a more general setting.  The measure should ideally be non--local, capturing the large scale properties of the whole model rather than the local properties of a small region of the spacetime. The total 3--volume of the constant time slice is 
a natural candidate in a closed FLRW model. Unfortunately it cannot be used in our case because it is infinite, due to of the infinite volume inside the black holes. In \cite{Clifton:2009jw, Bentivegna:2012ei, Clifton:2013jpa}, where only regular lattice configurations are considered, the size is defined by matching the size of the lattice cell to a corresponding lattice cell in the FLRW model. This is not possible in
the general case where no symmetries are present and no curves are geometrically distinguished. 

In this paper we will use  
the volume average of $\Phi$, taken over the whole 3--sphere with respect to the unphysical, round volume form $\eta = \sin^2\lambda\,\sin\vartheta\,\dd\lambda\wedge\dd\vartheta\wedge\dd\varphi$, as the measure of scale. Recall that the
total volume of a 3--sphere is $2\pi^2$. We define
\bea
\Mean\Phi = \frac{1}{2\pi^2}\int_{S^3} \Phi\,\eta.\nonumber
\eea
Since the singularities in $\Phi$ are of the order of $\lambda^{-1}$, the integral above is convergent.
Comparing (\ref{eq:conformal}) with the spatial closed FLRW metric $a(t)^2\,q_0$ we propose the following definition of 
the effective scale factor $a$:
\bea
\eeff a = \BarPhi^2\nonumber \label{eq:eeffa}
\eea
and the coarse--grained metric
\bea
\Mean{q} = \eeff a^2\,q_0 = \Mean{\Phi}^4\, q_0.\nonumber
\eea

Since the initial data is static, we may safely assume that $\Mean{K_{ij}} = 0 $ and consequently $\dot \eeff a = 0$.
$\eeff a$ and $\dot \eeff a$ suffice to define the effective Ricci scalar and energy density of the configuration via the first Friedmann equation
\bea
\left(\frac{\dot \eeff a}{\eeff a}\right)^2 + \frac{\eeff R }{6}= \frac{8\pi\eeff \rho}{3} \nonumber
\eea
and the relation between the scale and the curvature in the closed FLRW universe
\bea
\eeff{R} = \frac{6k}{\eeff{a}^2} =\frac{6}{\eeff{a}^2}.\nonumber
\eea
Taken together they yield
\bea
\eeff \rho = \frac{3}{8\pi}\,\Mean \Phi^{-4}.\nonumber
\eea
The momentum density $T_{ni}$  in this model vanishes identically, so the only interesting backreaction effect arises
as the difference between the integral of $T^{\eff}_{nn} = \eeff \rho$ over a mesoscopic domain and the sum of the ADM masses of black holes contained inside. 
We define the \emph{effective mass} of the model as the total mass of the corresponding FLRW model, i.e.
\bea
\eeff M = 2\pi^2 \eeff a^3\eeff \rho = \frac{3\pi}{4} \Mean\Phi^2.\nonumber
\eea
The corresponding FLRW model is completely homogeneous, so the effective total mass contains exactly the same information about the matter content as the effective
local energy density. We have found the former more convenient for the purposes of this paper and  in the next section we will use $\eeff M$ rather than $\eeff \rho$ to quantify the backreaction effects.

Let us note that for a given configuration $A$ the mean conformal factor $\Mean\Phi$ can be expressed in terms of the mass parameters in the following way: the Green's function $\Psi$ is integrable despite having a singularity and its average satisfies
\bea
\Mean{\Psi} = \frac{16}{3\pi}, \label{eq:avPsi}
\eea
so
\bea
\Mean\Phi =  \frac{16}{3\pi}\sum_{n=1}^N \alpha_i = \frac{16}{3\pi}\alpha. \label{eq:avPhi}
\eea
This leads to the expression for the effective mass in terms of $\alpha$:
\bea
\eeff M = \frac{64}{3\pi}\,\alpha^2. \label{eq:Meff}
\eea
It is quadratic in $\alpha_n$ just like $M_{\tot}$,  however it depends only on the values of the mass parameters and is independent
of the punctures' positions.

The proposed method of constructing the coarse--grained metric, basing on volume averaging over an unphysical volume form, may seem entirely artificial, we will see however that it produces a reasonable results in the  sense of giving the correct expression for the metric tensor in the continuum limit. Moreover, as we will see, other coarse--graining procedures we could possibly apply here will yield similar results in most cases.

\section{Estimates for the deviation of the conformal factor from the mean  and the backreaction} 

Let $\sigma_\Phi $ denote the relative difference between the average volume form and its value at a point $p$
\bea
\sigma_\Phi(p) = \frac{\left|\Phi(p) - \BarPhi\right|}{\BarPhi}.\nonumber
\eea
We substitute (\ref{eq:avPhi}) and (\ref{eq:PhiDefinition}) and after simple transformations get
\bea
\sigma_\Phi  = \frac{3\pi}{16}\left|\sum_{n=1}^N \frac{\alpha_n}{\alpha}\,\frac{1}{\sin \frac{\Lambda(p,x_n)}{2}} - \frac{16}{3\pi}\right|.\nonumber
\eea
Recall that the volume average of the Green's function (\ref{eq:Greens}) is equal $\frac{16}{3\pi}$, see (\ref{eq:avPsi}). Thus
the previous expression can be written as
 \bea
 \sigma_\Phi =\frac{3\pi}{16}\left|\sum_{n=1}^N \frac{\alpha_n}{\alpha}\,\frac{1}{\sin \frac{\Lambda(p,x_n)}{2}} - \Mean{\frac{1}{\sin \frac{\lambda}{2}} } \right|. \label{eq:rrr}
\eea
The first term in the modulus is the weighted sum of $N$ Green's functions centered at $x_n$ and evaluated at a single point $p$, but since $\Lambda(\cdot,\cdot)$ is symmetric we can turn this interpretation around and consider it as the weighted average of a \emph{single} Green's function centered at $p$ and evaluated at $x_i$'s. $\sigma_\Phi$ acquires thus the interpretation of the difference between a weighted average of values of function $\Psi_p(x) = \left(\sin \frac{\Lambda(p,x)}{2}\right)^{-1} $ and its volume average. Even without further analysis we may expect that if pairs $A$ cover $S^3$ in a uniform way the weighted sum should approximate the integral pretty well and $\sigma_\Phi$ should go to $0$ as $N$ goes to infinity. We introduce a notation for this type of difference
\begin{eqnarray}
G(A,p) = \sum_{n=1}^N \frac{\alpha_n}{\alpha}\,\frac{1}{\sin \frac{\Lambda(p,x_n)}{2}} - \Mean{\frac{1}{\sin \frac{\lambda}{2}} },
\label{eq:Gdef}\end{eqnarray}
so
\bea
\sigma_\Phi  = \frac{3\pi}{16}\left| G(A,p) \right|. \label{eq:r}
\eea

Consider now the relative mass deficit, measuring the backreaction in the normal--normal component of Einstein equations, defined as
\bea
\sigma_M = \frac{\left|\eeff M - M_\tot\right|}{M_\tot}.\nonumber
\eea
We will show that it can be estimated by a similar expression. From (\ref{eq:Meff}) we have 
\begin{eqnarray}
\fl \eeff M =  \frac{64}{3\pi}\sum_{n=1}^N\alpha_n\,\sum_{m=1}^N\alpha_m =\frac{64}{3\pi}\sum_{n=1}^N\alpha_n\sum_{m=1,m\neq n}^N\!\!\!\!\alpha_m + \frac{64}{3\pi}\sum_{k=1}^N \alpha\DU{k}{2}. 
\end{eqnarray}
We have separated out the diagonal terms from the double sum. Once again we substitute (\ref{eq:avPsi}) to get
\bea
\eeff M = 4 \Mean{\frac{1}{\sin \frac{\lambda}{2}}} \sum_{n=1}^N\alpha_n\sum_{m=1,m\neq n}^N\!\!\!\!\alpha_m + \frac{64}{3\pi}\sum_{k=1}^N \alpha\DU{k}{2}\nonumber
\eea
and therefore
\bea
\fl \sigma_M = \frac{1}{M_{\tot}}\left|4 \sum_{n=1}^N\alpha_n \left(\sum_{m=1,m\neq n}^N\frac{\alpha_m}{\sin \frac{\Lambda(x_m,x_n)}{2}} - (\alpha - \alpha_n) \Mean{\frac{1}{\sin \frac{\lambda}{2}}}\right) + \frac{64}{3\pi}\sum_{k=1}^N \alpha\DU{k}{2}\right|.\nonumber
\eea
We substitute the definition of $ M_{\tot}$ from (\ref{eq:Mtot}) to get
\bea
\fl \sigma_M = \left| \frac{3\pi}{16}\sum_{n=1}^N
\left(\frac{\alpha_n}{\alpha} - \frac{\alpha_n^2}{\alpha^2}\right)\left(\sum_{m=1,m\neq n}^N\!\!\frac{\alpha_m}{\alpha- \alpha_n}\frac{1}{\sin \frac{\Lambda(x_m,x_n)}{2}} - \Mean{\frac{1}{\sin \frac{\lambda}{2}}}\right)  + \sum_{k=1}^N \frac{\alpha_k^2}{\alpha^2}\right|.\nonumber
\eea
The term in the second brackets is obviously the difference between two types of average of $\Psi_p(x)$, just like  (\ref{eq:rrr}), but  without the
contribution from the puncture at
$x_n$ and centered at $p=x_n$: 
\bea
&&\sigma_M = \left| \frac{3\pi}{16}\sum_{n=1}^N \frac{\alpha_n}{\alpha}
\left(1 - \frac{\alpha_n}{\alpha}\right) G\left(\widetilde A_n, x_n\right)
+ \sum_{k=1}^N  \frac{\alpha_k^2}{\alpha^2} \right|,\nonumber\\
&&\widetilde A_n = A \backslash \left\{ \left(x_n,\alpha_n\right)\right\}.\nonumber
\eea
From this we obtain easily bounds for  $\sigma_M$   of the form of
\bea
\sigma_M \le  \frac{3\pi}{16}\sum_{n=1}^N \frac{\alpha_n}{\alpha}
\left(1 - \frac{\alpha_n}{\alpha}\right)\,\left| G\left(\widetilde A_n,x_n\right ) \right| + \sum_{k=1}^N  \frac{\alpha_k^2}{\alpha^2} \nonumber
\eea
(we have used here the fact  that $\alpha_n>0$ and, what follows easily, that $\alpha_n < \alpha$). This can be further simplified if we omit the
negative term in the brackets and bound the $\left| G(\widetilde A_n,x_n ) \right|$ by the maximum
taken over all punctures:
\bea
\sigma_M \le   \frac{3\pi}{16}\,\max_{n=1\dots N} 
\left|G\left(\widetilde A_n, x_n\right)\right| + \sum_{k=1}^N \left( \frac{\alpha_k}{\alpha} \right)^2.\nonumber
\eea
Let $\alpha_{\it max} = \max_{i=1\dots N} \alpha_i$. The second term can be bounded  from above in the following way:
\bea 
\sum_{k=1}^N \left( \frac{\alpha_k}{\alpha} \right)^2 \le \frac{\alpha_{\it max}}{\alpha}\,\sum_{k=1}^N \frac{\alpha_k}{\alpha} =
\frac{\alpha_{\it max}}{\alpha}\nonumber,
\eea
so finally
\bea
\sigma_M \le   \frac{3\pi}{16}\,\max_{n=1\dots N} 
\left|G\left(\widetilde A_n, x_n\right)\right| + \frac{\alpha_{\it max}}{\alpha}. \label{eq:s}
\eea
Obviously the second term goes to 0 as $N\to\infty$ if we distribute the mass parameters among black holes evenly. The difference thus depends mainly on the first term.

Finding bounds for expressions of  type (\ref{eq:Gdef}) has a long history in mathematics. It became very important  with the advent of the quasi-Monte Carlo methods of integration, where one tries to approximate the integral by a finite sum of values taken at a series of points $(x_i)$ lying inside the domain of integration \cite{owen2006, Sobol1998103}:
\bea
\int _{[0,1]^d} f \,dx = \sum_{i=1}^N \frac{1}{N}\,f(x_i) + E\nonumber
\eea
In order to estimate the quality of the approximation one would like to estimate the absolute value of the error $E$, which is given by an expression of type (\ref{eq:Gdef}).
The most important tool for this purpose is the Koksma--Hlawka inequality \cite{Hlawka1961}, in which 
the difference $E$ is bounded by the product
of two quantities: the so-called star discrepancy of the series, measuring how well it covers the whole domain of integration and
the Hardy--Krause variation of the function $f$, measuring how much the value of $f$ jumps inside the domain \cite{DrmotaTichy, lmsobol1}:
\bea
|E| \le D^*({x_i})\,V_{H-K}(f). \nonumber
\eea
For appropriately chosen sequences $(x_i)$  $D^*(x_i)$ goes to $0$ as $N\to\infty$ at an explicitly known rate \cite{lmsobol1}.
Unfortunately the Koksma--Hlawka inequality does not work for unbounded functions like (\ref{eq:Greens}), whose Hardy--Krause variation is infinite. 
While many papers have been written on the bounding $E$ if $f$ has a singularity \cite{owen2006, Hartinger2004654, deDoncker2003259, sobol1973}, none of them seems to offer methods applicable in our case, where the domain of integration is spherical, the volume form  is non--standard and the weights in the average may be more complicated than simply $\frac{1}{N}$.
 We have thus developed our own
tools for estimating $\sigma_\Phi$ and $\sigma_M$, based on an improved version of the Koksma--Hlawka inequality, which we will present below. Before we go on however let us note that when the function $f$ has a singularity, the convergence of $E$ to 0 is not guaranteed even if the sampling points cover the domain uniformly. This is because $E$ can always be driven to very large or small values by a single point approaching the singularity too close and driving the weighted sum up or down. Therefore the estimate for $E$ for unbounded $f$ must  have a more complicated structure, involving the dependence on the distance from the singularity to the nearest sampling point.

\subsection{Cap discrepancy function and the modified discrepancy of $A$}
Before we state our main result we must introduce two important tools measuring how evenly a set of punctures
covers the 3--sphere.
Let $B(p,\lambda)$ denote a 3--spherical cap centered at $p\in S^3$, i.e. $B=\left\{x\in S^3| \Lambda(p,x)\le \lambda\right\}$.
We define the \emph{cap discrepancy function} $\disc_p(\lambda)$ as the difference between the properly normalized volume of 
a 3--spherical cap of radius $\lambda$ centered at $p$ and the sum of mass parameters of all punctures contained within the cap:
\bea
\disc_p(\lambda) = \frac{1}{2\pi^2}\vol(B(p,\lambda)) - \sum_{x_i\in B(p,\lambda)} \frac{\alpha_i}{\alpha}. \label{eq:disc}
\eea
The discrepancy function measures how well the volume of a cap measured with respect to the round volume form agrees with the volume measured with respect to a discrete measure centered at $x_i's$ (see \cite{DrmotaTichy} for a general discussion of the notion of sequence discrepancy on a sphere). Obviously the smaller values it takes on the whole $[0,\pi]$ interval, the more evenly the 3--sphere is covered with punctures. We may also expect that the values of $\disc_p(\lambda)$ will decrease as we add more and more punctures in a uniform manner. Note that by definition $\disc_p(\pi) = 0$ and, if $p$ is not a puncture, $\disc_p(0) = 0$.

If we introduce a spherical coordinate system $(\lambda,\vartheta,\varphi)$ centered at $p$, the discrepancy function
takes a simpler form:
\bea
\disc_p(\lambda) = \frac{2}{\pi}\Xi(\lambda) - \sum_{i=1}^N \frac{\alpha_i}{\alpha}\theta(\lambda - \lambda_i),\label{eq:disc2}
\eea 
where $\Xi$ is a short-hand notation for the antiderivative
\bea
\Xi(\lambda) = \int_0^\lambda \sin^2 x\,dx, \label{eq:Xi}
\eea
$\theta$ is the  step function
\bea
\theta(\lambda) = \left\{
\begin{array}{ll}
0 & {\rm for }\qquad \lambda < 0 \\
1 & {\rm for }\qquad \lambda \ge 0
\end{array}\right. \label{eq:step}
\eea
and $\lambda_i$ is the $\lambda$ coordinate of puncture $x_i$.

We would like to define a single number providing the upper bound for the values of the cap discrepancy function for all $p\in S^3$. The standard method here is to take the global supremum over $\lambda$ and $p$, called the \emph{total cap discrepancy} \cite{DrmotaTichy}. This parameter turns out to be insufficient for our purposes and therefore we introduce a stronger version of a global bound for $\disc_p(\lambda)$. Let  $F_D(\lambda)$ be the one--parameter family of functions
\bea
F_D(\lambda) = \left\{
\begin{array}{ll}\frac{4D}{\pi}\sin^2(\lambda + D) &\textrm{for }0 \le \lambda \le \frac{\pi}{2} - D \\
\frac{4D}{\pi} & \textrm{for }\frac{\pi}{2} - D \le \lambda \le \pi\end{array} \right.\nonumber
\eea
(see figure \ref{fig:FD}). The smallest $D$ such that $\disc_p(\lambda) \le F_D(\lambda)$ for all
$0\le\lambda\le\pi$ will be called the \emph{modified cap discrepancy at $p$} and denoted by $E_p$. The supremum of $E_p$ over the whole 3--sphere will be called the \emph{global modified discrepancy} of configuration $A$ and denoted by $E$. Let us stress that
$E$, being independent of the position on the sphere,  is a property  of the configuration $A$ itself. In comparison to the standard total cap discrepancy, given by the supremum of $\disc_p(\lambda)$,  $E$ bounds the values of $\disc_p(\lambda)$ more tightly close to $\lambda=0$, otherwise its properties are very similar. Obviously the smaller $E$,  the more tightly the cap discrepancy function values are bounded and the more evenly the punctures must be spread over the sphere.
\bfi
\bce
\includegraphics[width=0.80\textwidth]{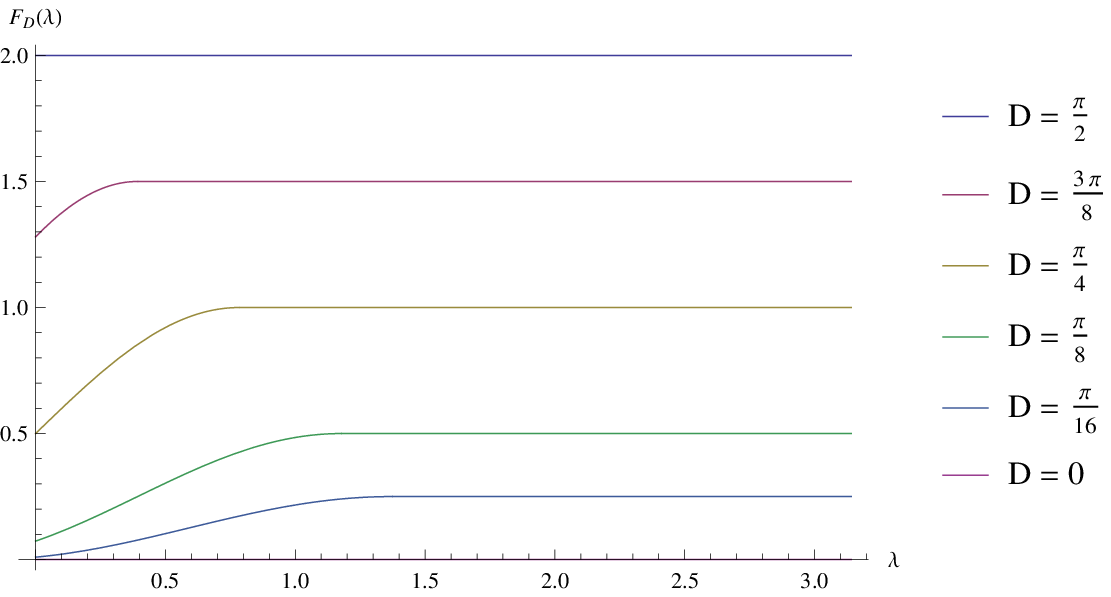}
\caption{Functions $F_D(\lambda)$ for different values of $D$}
\label{fig:FD}
\ece
\efi

\subsection{The main theorems}
We are now ready to formulate two main theorems of the paper.
\begin{theorem} 
Let $p \in S^3 \backslash \left\{x_1,\dots,x_n\right\}$, let $\lambda_{\min} = \min_{n=1,\dots,N} \Lambda(p,x_n)$ and let $E$ be the modified discrepancy of $A$. Let $\varepsilon$ satisfy $0 < \varepsilon \le 1$. Then, assuming that 
\bea 
E + \lambda_{\min} < \frac{\pi}{2}, \label{eq:conditPhi}
\eea  the following inequality is satisfied:
$$\frac{\left|\Phi(p) - \BarPhi\right|}{\BarPhi} \le C_\varepsilon\,U_\varepsilon(E,\lambda_{\min}),$$
where
\bea 
U_\varepsilon(E,\lambda_{\min}) &=& \max\left(U_{1,\varepsilon},U_{2,\varepsilon},U_{3,\varepsilon}
\right)\nonumber\\
U_{1,\varepsilon}(E,\lambda_{\min}) &=& \frac{2\pi^\varepsilon}{3}\lambda_{\rm min}^{2-\varepsilon}, \nonumber\\
U_{2,\varepsilon}(E,\lambda_{\min}) &=& \frac{4E}{\pi}\left(1+2E\lambda_{\min}^{-\frac{1 + \varepsilon}{2}}\right)^2,\nonumber \\
U_{3,\varepsilon}(E,\lambda_{\min}) &=&  \frac{4E}{\pi\,\cos^{1+\varepsilon}E} \label{eq:U}
\eea
and the $\varepsilon$--dependent constant $C_\varepsilon$ is
$$C_\varepsilon = \frac{3\pi}{16}\int_0^\pi \left|\left(\frac{1}{\sin\frac{\lambda}{2}}\right)'\right|\,V^{1+\varepsilon}(\lambda) \dd\lambda$$
with
$$V(\lambda)=\left\{\begin{array}{ll} \sin\lambda & {\rm for}\quad 0 \le \lambda \le \frac{\pi}{2} \\
 1 & {\rm for }\quad \frac{\pi}{2} < \lambda \le \pi\end{array}\right. .$$
\label{th:Phi} 
\end{theorem}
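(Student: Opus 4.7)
My plan is to reduce the estimate to a weighted Koksma--Hlawka-type bound expressed in terms of the cap discrepancy function $\disc_p(\lambda)$ and then to control that discrepancy function separately in three regimes determined by $\lambda_{\min}$ and $E$.

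First, by equations (\ref{eq:r}) and (\ref{eq:Gdef}) it is enough to prove that
$|G(A,p)|\le (16/(3\pi))\,C_\varepsilon\,U_\varepsilon(E,\lambda_{\min})$.
I would introduce spherical coordinates $(\lambda,\vartheta,\varphi)$ centered at $p$; then $\Psi_p$ becomes the radial function $f(\lambda)=1/\sin(\lambda/2)$, the volume average becomes $(2/\pi)\int_0^\pi f(\lambda)\sin^2\lambda\,\dd\lambda$, and in view of (\ref{eq:disc2}) the quantity $G(A,p)$ can be written as the Lebesgue--Stieltjes integral $-\int_0^\pi f(\lambda)\,\dd\disc_p(\lambda)$. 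Integration by parts then gives
\[
G(A,p)=\int_0^\pi f'(\lambda)\,\disc_p(\lambda)\,\dd\lambda,
\]
provided the boundary terms vanish. At $\lambda=\pi$ this is immediate because $\disc_p(\pi)=0$; at $\lambda=0$ I would use the fact that no puncture lies in $B(p,\lambda_{\min})$, so $\disc_p(\lambda)=(2/\pi)\Xi(\lambda)=O(\lambda^3)$ near $0$, while $f(\lambda)=O(1/\lambda)$, and the product vanishes.

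Next I would introduce the weight $V^{1+\varepsilon}(\lambda)$ and apply the trivial H\"older-type bound
\[
|G(A,p)|\le \left(\int_0^\pi |f'(\lambda)|\,V^{1+\varepsilon}(\lambda)\,\dd\lambda\right)\sup_{\lambda\in[0,\pi]}\frac{|\disc_p(\lambda)|}{V^{1+\varepsilon}(\lambda)}.
\]
The prefactor is exactly $(16/(3\pi))\,C_\varepsilon$, so the whole theorem reduces to showing that the supremum is bounded by $U_\varepsilon(E,\lambda_{\min})$. This is the place where the tailored weight $V^{1+\varepsilon}$ becomes essential: a plain uniform bound on $\disc_p$ would lose the delicate interplay between the discrepancy near $\lambda=0$ and the singularity of $f$.

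Finally I would estimate $|\disc_p(\lambda)|/V^{1+\varepsilon}(\lambda)$ on three subintervals that together cover $[0,\pi]$ thanks to condition (\ref{eq:conditPhi}).
On $[0,\lambda_{\min}]$ the discrete part of $\disc_p$ is absent, so $|\disc_p(\lambda)|=(2/\pi)\Xi(\lambda)\le 2\lambda^3/(3\pi)$, and combining this with Jordan's inequality $\sin\lambda\ge 2\lambda/\pi$ to bound $V^{1+\varepsilon}$ from below yields $U_{1,\varepsilon}$ (with the supremum attained at $\lambda=\lambda_{\min}$ since $2-\varepsilon>0$).
On the intermediate range $[\lambda_{\min},\pi/2-E]$ one uses $|\disc_p(\lambda)|\le F_E(\lambda)=(4E/\pi)\sin^2(\lambda+E)$ and expands $\sin(\lambda+E)=\sin\lambda\cos E+\cos\lambda\sin E\le \sin\lambda+E$; dividing by $V^{1+\varepsilon}=\sin^{1+\varepsilon}\lambda$ and using $\sin\lambda\ge\sin\lambda_{\min}\ge (2/\pi)\lambda_{\min}$ in the correction term produces $U_{2,\varepsilon}$. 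On the remaining interval $[\pi/2-E,\pi]$ the bound $|\disc_p|\le 4E/\pi$ together with $V(\lambda)\ge\cos E$ gives $U_{3,\varepsilon}$. Taking the maximum of the three bounds finishes the proof.

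The main obstacle I expect is the intermediate regime $U_{2,\varepsilon}$: one has to get the right trigonometric upper bound on $\sin^2(\lambda+E)/\sin^{1+\varepsilon}\lambda$ uniformly in $\lambda$, making sure that the correction term has the explicit form $(1+2E\lambda_{\min}^{-(1+\varepsilon)/2})^{2}$ rather than something weaker, so that the bound actually improves as $E$ shrinks faster than $\lambda_{\min}^{(1+\varepsilon)/2}$. The other delicate point is justifying the integration by parts rigorously in the Stieltjes sense at $\lambda=0$, which relies crucially on the assumption that $p$ is not itself a puncture.
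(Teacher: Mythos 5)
Your proposal is correct and follows essentially the same route as the paper: the Hlawka--Zaremba integration by parts reducing $G(A,p)$ to $\int_0^\pi f'\,\disc_p\,\dd\lambda$, the weighted bound with $V^{1+\varepsilon}$, and the three-interval estimate of $\sup|\disc_p|/V^{1+\varepsilon}$ over $[0,\lambda_{\min}]$, $[\lambda_{\min},\pi/2-E]$ and $[\pi/2-E,\pi]$ is exactly the paper's Lemma \ref{lm:bounds1}. The only (immaterial) difference is in the middle interval, where you use $\sin(\lambda+E)\le\sin\lambda+E$ together with Jordan's inequality to land directly on the stated $U_{2,\varepsilon}$ with the factor $2$, while the paper bounds $\cot\lambda\le 1/\lambda$ to get a slightly sharper constant and then relaxes it.
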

Note that because of the singularity of $\Psi(x)$ the estimate depends not only on $E$, but also on the distance from the nearest black hole $\lambda_{\min}$. For sufficiently small $E$ we can neglect the cos function in the denominator of $U_{3,\varepsilon}$ and we see that the dependence is via a positive power in the first variable and via a negative one in the second one. Since we expect that both  go to 0 as $N \to \infty$ the convergence of $U_\varepsilon$ to 0 depends on the  convergence rates of both $E$ and $\lambda_{\min}$ to 0. We will discuss this issue in detail in Section \ref{sec:sequence}.

\begin{theorem}
Let \begin{itemize}
\item $\delta_{\min} = \min_{i=1\dots N}\,\min_{j=1\dots N,\,j\neq i}\,\Lambda(x_i,x_j)$,
\item $\delta_{\max} = \max_{i=1\dots N}\,\min_{j=1\dots N,\,j\neq i}\, \Lambda(x_i,x_j)$,
\item $\alpha_{\max} = \max_{i=1\dots N} \alpha_i$ 
\end{itemize}
and let $E$ be the modified discrepancy of $A$. Let $\varepsilon$ satisfy $0 < \varepsilon \le 1$. Then, assuming that 
\bea 
E + \delta_{\min}< \frac{\pi}{2}, \label{eq:conditM}
\eea the following inequality holds:
$$\frac{\left|\eeff M - M_\tot\right|}{ M_\tot} \le C_\varepsilon\,W_\varepsilon\left(E,\frac{\alpha_{\max}}{\alpha},\delta_{\max},\delta_{\min}\right) + \frac{\alpha_{\max}}{\alpha},$$
where
\bea 
W_\varepsilon\left(E,\gamma,\delta_{\max},\delta_{\min}\right) &=&  \max\left(W_{1,\varepsilon},W_{2,\varepsilon},W_{3,\varepsilon}\right), \nonumber\\
W_{1,\varepsilon}\left(E,\gamma,\delta_{\max},\delta_{\min}\right) &=& \frac{2\pi^\varepsilon}{3}\delta_{\max}^{2-\varepsilon}, \nonumber\\
W_{2,\varepsilon}\left(E,\gamma,\delta_{\max},\delta_{\min}\right) &=& \frac{4 
E}{\pi}\left(1 + E\,\delta_{\min}^{-\frac{1+\varepsilon}{2}}\right)^2 + 2\pi^{1+\varepsilon}\,\gamma\,\delta_{\min}^{-1-\varepsilon}, \nonumber\\
W_{3,\varepsilon}\left(E,\gamma,\delta_{\max},\delta_{\min}\right) &=&
\frac{\frac{4E}{\pi} + 2\gamma}{\cos^{1+\varepsilon}E} \label{eq:W}
\eea
and $C_\varepsilon$ is defined just like in Theorem \ref{th:Phi}.
 \label{th:M}
\end{theorem}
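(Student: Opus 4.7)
The preliminary bound
\[
\sigma_M \le \frac{3\pi}{16}\,\max_{n=1,\dots,N}|G(\widetilde A_n, x_n)| + \frac{\alpha_{\max}}{\alpha}
\]
derived immediately before the theorem statement already isolates the second summand on the right of the claimed inequality, so the whole task reduces to establishing, for each $n$, the bound $|G(\widetilde A_n, x_n)| \le \frac{16}{3\pi}C_\varepsilon W_\varepsilon(E,\alpha_{\max}/\alpha,\delta_{\max},\delta_{\min})$. Since the base point $p=x_n$ is not a puncture of $\widetilde A_n:=A\setminus\{(x_n,\alpha_n)\}$, one would like to invoke Theorem~\ref{th:Phi} directly with $A$ replaced by $\widetilde A_n$ and $\lambda_{\min}$ replaced by $\lambda^{(n)}_{\min}:=\min_{j\ne n}\Lambda(x_n,x_j)$. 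The obstruction is that the modified discrepancy of $\widetilde A_n$, say $\widetilde E_n$, is not a priori controlled by $E$: removing the puncture $x_n$ rescales every remaining weight from $\alpha_i/\alpha$ to $\alpha_i/(\alpha-\alpha_n)$ and deletes the jump at $\lambda=0$ in the counting function of (\ref{eq:disc2}).

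The plan is therefore not to cite Theorem~\ref{th:Phi} as a black box but to rerun its argument on the pair $(\widetilde A_n,x_n)$, controlling the cap discrepancy of $\widetilde A_n$ by that of $A$ up to a correction of order $\gamma:=\alpha_{\max}/\alpha$. A direct computation from (\ref{eq:disc2}) gives
\[
\widetilde{\disc}_{x_n}(\lambda) = \disc_{x_n}(\lambda) + \frac{\alpha_n}{\alpha}\theta(\lambda) - \frac{\alpha_n}{\alpha(\alpha-\alpha_n)}\sum_{x_i\in B(x_n,\lambda),\,i\ne n}\alpha_i,
\]
whose right--hand side always lies in $[\disc_{x_n}(\lambda),\,\disc_{x_n}(\lambda)+\alpha_n/\alpha]$, so that $|\widetilde{\disc}_{x_n}(\lambda)|\le F_E(\lambda)+\gamma$ uniformly in $\lambda$. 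On the inner interval $\lambda\in(0,\lambda^{(n)}_{\min})$ no other punctures lie in the cap and one has the sharper identity $\widetilde{\disc}_{x_n}(\lambda)=\frac{2}{\pi}\Xi(\lambda)$, independently of $E$ and $\gamma$.

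Following the Abel--summation step that underlies the proof of Theorem~\ref{th:Phi}, one rewrites
\[
G(\widetilde A_n,x_n) = \int_0^\pi \left(\frac{1}{\sin\frac{\lambda}{2}}\right)' \widetilde{\disc}_{x_n}(\lambda)\,d\lambda,
\]
the boundary terms vanishing because $\widetilde{\disc}_{x_n}(\lambda)=O(\lambda^3)$ at $0$ and $\widetilde{\disc}_{x_n}(\pi)=0$. The integral is then split at the same near/far cut--off as in the proof of Theorem~\ref{th:Phi}. The near--region contribution, fed by the identity $\widetilde{\disc}_{x_n}=\frac{2}{\pi}\Xi$, reproduces the analog of $U_{1,\varepsilon}$ with $\lambda^{(n)}_{\min}$ in place of $\lambda_{\min}$; maximising over $n$ replaces it by $\delta_{\max}$ and yields $W_{1,\varepsilon}$. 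The far--region integrals are handled with $|\widetilde{\disc}_{x_n}|\le F_E+\gamma$: the $F_E$ piece produces, by the same calculations as in Theorem~\ref{th:Phi}, the principal summands of $W_{2,\varepsilon}$ and $W_{3,\varepsilon}$ with $\delta_{\min}$ in the singular denominators (since $\lambda^{(n)}_{\min}\ge\delta_{\min}$ makes the factors $(\lambda^{(n)}_{\min})^{-(1+\varepsilon)/2}$ worst at $\delta_{\min}$), while the constant--$\gamma$ piece produces the extra summands $2\pi^{1+\varepsilon}\gamma\,\delta_{\min}^{-1-\varepsilon}$ in $W_{2,\varepsilon}$ and $2\gamma/\cos^{1+\varepsilon}E$ in $W_{3,\varepsilon}$. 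The hypothesis $E+\delta_{\min}<\pi/2$ ensures the far cut--off lies within the $\sin^2$ branch of $F_E$, which is what makes the explicit integrals tractable.

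The main obstacle will be the book--keeping: verifying that replacing $F_{\widetilde E_n}$ by $F_E+\gamma$ yields precisely the coefficients and exponents appearing in (\ref{eq:W}), and that the maximum over $n$ is distributed consistently between the monotone quantity $\delta_{\max}$, which worsens the near--zero contribution, and the inverse--singular quantities involving $\delta_{\min}$, which worsen the far--region denominators. Once these three pieces are assembled and combined with the preliminary bound, the theorem follows.
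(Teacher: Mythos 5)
Your proposal is correct and follows essentially the same route as the paper: the reduction via the preliminary bound, the Hlawka--Zaremba identity applied to $G(\widetilde A_n,x_n)$, the comparison $|\widetilde\disc_{x_n}|\le|\disc_{x_n}|+O(\alpha_n/\alpha)$ (the paper's Lemma 5.1 packages exactly your three-interval case analysis), and the monotonicity argument distributing the maximum over $n$ between $\delta_{\max}$ and $\delta_{\min}$. Your one-sided observation that the correction lies in $[0,\alpha_n/\alpha]$ is in fact slightly sharper than the paper's bound of $2\alpha_n/\alpha$, so it only strengthens the stated inequality.
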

In this theorem the bounds depend on positive powers of $E$ as well as $\frac{\alpha_{\max}}{\alpha}$ and $\delta_{\max}$, but also
on the negative power of the distance between the two closest black holes $\delta_{\min}$. Again we expect all these quantities to vanish as we increase the number of black holes while keeping them evenly distributed, so $W_\epsilon$ goes to $0$ only if 
$\delta_{\min}$ does not go to zero too quickly. 


\section{Proof of Theorem \ref{th:Phi}}
\begin{proof}
Recall that the relative difference $\sigma_\Phi$ is given by (\ref{eq:r}). Introduce a spherical coordinate system $(\lambda, \vartheta, \varphi)$ in which $p$ lies at the pole $\lambda=0$. Since the Green's function $\Psi_p(x)$ depends only on $\lambda$, the integration over the whole sphere in $G(A,p)$ simplifies to a one--dimensional
integral. After simplification we obtain
\bea
G(A,p) =  \sum_{n=1}^N \frac{\alpha_n}{\alpha}\,\frac{1}{\sin \frac{\lambda_n}{2}} 
 -\frac{2}{\pi}\int_{0}^{\pi} \frac{\sin^2\lambda}{\sin\frac{\lambda}{2}}\,\dd\lambda. \label{eq:G1d}
\eea
In  \ref{app:Hlawka-Zaremba} we prove a version of the Hlawka--Zaremba identity \cite{Hlawka1961, Hlawka1961p, Zaremba1968} relating the expression above to the integral of the  discrepancy function, with a non--standard measure and a singularity at the origin:
\begin{lemma} \label{lm:Hlawka-Zaremba}
Let $f(\lambda)$ be continuous on $(0,\pi]$, possibly with a single pole at $\lambda=0$ of order $\lambda^{-q}$, $q>-3$ and let $0<\lambda_n \le \pi$, $n=1,\dots,N$. Then
\bea
\frac{2}{\pi}\int_{0}^{\pi} f(\lambda) \,\sin^2\lambda\,\dd\lambda - \sum_{n=1}^N \frac{\alpha_n}{\alpha}\,f(\lambda_n) =
-\int_0^\pi f'(\lambda)\,\disc(\lambda)\,\dd\lambda  \nonumber
\eea
where
\bea
\disc(\lambda) = \frac{2}{\pi}\Xi(\lambda) - \sum_{n=1}^N \frac{\alpha_n}{\alpha}\,\theta(\lambda-\lambda_n).\nonumber
\eea
\end{lemma}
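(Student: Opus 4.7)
My plan is to prove the identity by integration by parts applied separately to the two parts of $\disc(\lambda)$. Write
$$\disc(\lambda) = \frac{2}{\pi}\,\Xi(\lambda) - H(\lambda),\qquad H(\lambda)=\sum_{n=1}^N \frac{\alpha_n}{\alpha}\,\theta(\lambda-\lambda_n),$$
so that the right-hand side of the identity decomposes as
$$-\int_0^\pi f'(\lambda)\,\disc(\lambda)\,\dd\lambda = -\frac{2}{\pi}\int_0^\pi f'(\lambda)\,\Xi(\lambda)\,\dd\lambda + \int_0^\pi f'(\lambda)\,H(\lambda)\,\dd\lambda.$$
I would first work on the truncated interval $[\varepsilon,\pi]$ with small $\varepsilon>0$ to stay away from the possible singularity of $f$ at the origin, and then take $\varepsilon \to 0^+$ at the end.

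For the first integral, standard integration by parts gives
$$-\frac{2}{\pi}\int_\varepsilon^\pi f'(\lambda)\,\Xi(\lambda)\,\dd\lambda = -\frac{2}{\pi}\bigl[f(\lambda)\Xi(\lambda)\bigr]_\varepsilon^\pi + \frac{2}{\pi}\int_\varepsilon^\pi f(\lambda)\,\sin^2\lambda\,\dd\lambda,$$
using $\Xi'(\lambda)=\sin^2\lambda$. Since $\Xi(\pi)=\pi/2$, the upper boundary term is $-f(\pi)$. For the second integral I would either use the distributional derivative $H'(\lambda)=\sum_n (\alpha_n/\alpha)\,\delta(\lambda-\lambda_n)$, or, more concretely, order the punctures $\lambda_1\le\dots\le\lambda_N$ and split into pieces on which $H$ is constant, summing the resulting telescoping boundary contributions. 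Either way one obtains
$$\int_\varepsilon^\pi f'(\lambda)\,H(\lambda)\,\dd\lambda = f(\pi)\,H(\pi) - f(\varepsilon)\,H(\varepsilon) - \sum_{n=1}^N \frac{\alpha_n}{\alpha}\,f(\lambda_n),$$
and since $H(\pi)=\sum_n \alpha_n/\alpha = 1$, the upper boundary contribution is $+f(\pi)$, which cancels exactly with the $-f(\pi)$ from the $\Xi$ piece.

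It remains to handle the lower boundary terms $-\frac{2}{\pi}(-f(\varepsilon)\Xi(\varepsilon))$ and $f(\varepsilon)H(\varepsilon)$ as $\varepsilon\to 0^+$. If all $\lambda_n>0$, then for $\varepsilon$ small enough $H(\varepsilon)=0$, so this term drops out. For the $\Xi$ term, near the origin $\Xi(\varepsilon)\sim \varepsilon^3/3$, so $f(\varepsilon)\,\Xi(\varepsilon)=O(\varepsilon^{3-q})$ under the assumed bound on the pole of $f$; the hypothesis on $q$ is precisely what ensures this vanishes in the limit. The same hypothesis guarantees absolute convergence of the two integrals in the final identity near $\lambda=0$ (for $f'\cdot\disc$ one uses $\disc(\lambda)=O(\lambda^3)$ near $\lambda=0$, matching the growth of $f'$), so the limiting identity is legitimate.

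The main technical obstacle is the bookkeeping at the two endpoints: the cancellation of the $\pm f(\pi)$ at the upper endpoint uses in an essential way that $H(\pi)=1$ and $\Xi(\pi)=\pi/2$ (i.e.\ the mass parameters are correctly normalized so that $\disc(\pi)=0$), while the vanishing of the lower boundary terms relies on the separation $\lambda_{\min}>0$ from the origin and on the explicit growth condition. Everything else is routine integration by parts, with the step-function piece handled either distributionally or by splitting into intervals of constancy.
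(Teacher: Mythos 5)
Your proof is correct and is essentially the same argument as the paper's: integration by parts with careful bookkeeping of the jump contributions at the $\lambda_n$ and of the endpoint behaviour (the product $f\cdot\Xi$ vanishing at the origin by the pole condition, and the cancellation at $\pi$ which in the paper's version is just the statement $\disc(\pi)=0$). The only cosmetic difference is that you split $\disc$ into its smooth part $\Xi$ and the step part $H$ and integrate each by parts separately, whereas the paper keeps $\disc$ intact and integrates by parts on the subintervals between consecutive punctures; the content is identical.
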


Applying Lemma \ref{lm:Hlawka-Zaremba} to (\ref{eq:G1d}) yields
\bea
G(A,p) = \int_0^\pi \left(\frac{1}{\sin\frac{\lambda}{2}}\right)'\,\disc_p(\lambda)\,\dd\lambda. \label{eq:Gap2}
\eea
In the standard Koksma--Hlawka inequality the expression above is bounded by the product of  $\sup|\disc_p(\lambda)|$ and $ \int_0^\pi \left|\left(\frac{1}{\sin\frac{\lambda}{2}}\right)'\right|\dd\lambda$. As we mentioned before, in our case the integral is infinite and this bound is useless. We therefore need to modify   (\ref{eq:Gap2}) in order to obtain a finite integral. We begin by
introducing an auxiliary function
\bea
V(\lambda) = \left\{\begin{array}{lr} \sin\lambda & \textrm{for} \quad 0 \le \lambda \le \frac{\pi}{2} \\
                          1 & \textrm{for} \quad  \frac{\pi}{2} < \lambda \le \pi \end{array}\right. \label{eq:Vdef}
\eea and rewriting (\ref{eq:Gap2}) as
\bea
G(A,p) = \int_0^\pi V(\lambda)^{1+\varepsilon}\left(\frac{1}{\sin\frac{\lambda}{2}}\right)'\,\frac{\disc_p(\lambda)}{V(\lambda)^{1+\varepsilon}}\,\dd\lambda.\nonumber
\eea
 $G(A,p)$ is thus bounded by
\bea
|G(A,p)|\le \sup \frac{|\disc_p(\lambda)|}{V(\lambda)^{1+\varepsilon}}  \int_0^\pi V(\lambda)^{1+\varepsilon}\left|\left(\frac{1}{\sin\frac{\lambda}{2}}\right)'\right|\,\dd\lambda \label{eq:Gineq}
\eea
with the integral now convergent. The convergence was achieved at the expense of more tight bounds on $\disc_p(\lambda)$ we 
need near $\lambda=0$, where the denominator under the supremum function vanishes. Fortunately the supremum can be bounded by a function of the modified discrepancy at $p$ and the distance from $p$ to the nearest puncture:
\begin{lemma} \label{lm:bounds1}
Let $p \in S^3 \backslash \left\{x_1,\dots,x_n\right\}$, let $\lambda_{\min} = \min_{n=1,\dots,N} \Lambda(p,x_n)$ and let $E_p$ be the modified discrepancy of $A$ at $p$. Let $\varepsilon$ satisfy $0 < \varepsilon \le 1$.
Let $V(\lambda)$ be given by (\ref{eq:Vdef}). 
 Then, assuming that $E_p + \lambda_{\min} < \frac{\pi}{2}$, we have
\bea
\frac{\left|\disc_p(\lambda)\right|}{V(\lambda)^{1+\varepsilon}} \le \max\left(\frac{2\pi^\varepsilon}{3}\lambda_{\rm min}^{2-\varepsilon},
\frac{4E_p}{\pi}\left(1+E_p\lambda_{\min}^{-\frac{1 + \varepsilon}{2}}\right)^2,\frac{4E_p}{\pi\,\cos^{1+\varepsilon}E_p}\right).\nonumber
\eea
\end{lemma}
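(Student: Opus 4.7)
The plan is to bound the ratio $|\disc_p(\lambda)|/V(\lambda)^{1+\varepsilon}$ uniformly on $[0,\pi]$ by splitting the domain into three subintervals, each producing one of the three terms $U_{1,\varepsilon}$, $U_{2,\varepsilon}$, $U_{3,\varepsilon}$. The natural breakpoints are $\lambda_{\min}$ (the distance from $p$ to the nearest puncture) and $\pi/2-E_p$ (the transition point in the piecewise definition of $F_D$). The hypothesis $E_p+\lambda_{\min}<\pi/2$ guarantees that these breakpoints are correctly ordered and that all three subintervals have non-empty interior.

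First I would handle the inner region $[0,\lambda_{\min}]$, which is easy because the cap $B(p,\lambda)$ contains no puncture, so the discrete part of $\disc_p$ vanishes and (\ref{eq:disc2}) reduces to $\disc_p(\lambda)=(2/\pi)\Xi(\lambda)\ge 0$. Using $\sin^2 x\le x^2$ to obtain $\Xi(\lambda)\le\lambda^3/3$, together with $V(\lambda)=\sin\lambda\ge(2/\pi)\lambda$ in the denominator, controls the ratio by a constant times $\lambda^{2-\varepsilon}$, which is increasing; evaluating at $\lambda=\lambda_{\min}$ reproduces $U_{1,\varepsilon}$. The outer region $[\pi/2-E_p,\pi]$ is equally routine: here $F_{E_p}$ is flat at the value $4E_p/\pi$, and $V(\lambda)\ge\cos E_p$ on $[\pi/2-E_p,\pi/2]$ (while $V\equiv 1$ on $[\pi/2,\pi]$), which immediately yields the bound $U_{3,\varepsilon}$.

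The main work is the middle region $[\lambda_{\min},\pi/2-E_p]$, where by definition of $E_p$ one has $|\disc_p(\lambda)|\le(4E_p/\pi)\sin^2(\lambda+E_p)$ while $V^{1+\varepsilon}=\sin^{1+\varepsilon}\lambda$. The key trick, which I expect to be the step hardest to guess cleanly, is to control the square root of the ratio: expanding $\sin(\lambda+E_p)=\cos E_p\sin\lambda+\sin E_p\cos\lambda$ and writing
\[
\frac{\sin(\lambda+E_p)}{\sin^{(1+\varepsilon)/2}\lambda}=\cos E_p\,\sin^{(1-\varepsilon)/2}\lambda+\sin E_p\,\frac{\cos\lambda}{\sin^{(1+\varepsilon)/2}\lambda}.
\]
For $\lambda\le\pi/2$ the first summand is bounded by $1$; for the second, using $\sin E_p\le E_p$, $\cos\lambda\le 1$, and $\sin\lambda\ge(2/\pi)\lambda\ge(2/\pi)\lambda_{\min}$ together with $(\pi/2)^{(1+\varepsilon)/2}\le 2$ for $\varepsilon\le 1$ gives the bound $2E_p\lambda_{\min}^{-(1+\varepsilon)/2}$. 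Squaring recovers exactly $U_{2,\varepsilon}$. The asymmetric split of the exponent---using $(1+\varepsilon)/2$ for the singular factor and $(1-\varepsilon)/2$ for the bounded one---is precisely what produces the characteristic half-integer power of $\lambda_{\min}$, rather than the cruder $\lambda_{\min}^{-1}$ one would obtain by bounding $\sin(\lambda+E_p)/\sin\lambda$ directly.

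A subtle point to address is that the stated definition of $E_p$ gives only the one-sided bound $\disc_p\le F_{E_p}$, whereas the lemma involves $|\disc_p|$. In Region I this is irrelevant since $\disc_p\ge 0$ automatically; in Regions II and III I would either read the definition with absolute values (which the antipodal symmetry $-\disc_p(\lambda)=\disc_{-p}(\pi-\lambda)$ naturally suggests) or include a short separate argument showing that the same $F_{E_p}$ also controls the negative part of $\disc_p$.
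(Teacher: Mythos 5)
Your proof follows essentially the same route as the paper's: the same three-region split at $\lambda_{\min}$ and $\frac{\pi}{2}-E_p$, the same treatment of the inner and outer regions, and the same square-root trick with the asymmetric exponent split $\frac{1\pm\varepsilon}{2}$ in the middle region. One small quantitative discrepancy: in the middle region you lower-bound $\sin\lambda$ by $\frac{2}{\pi}\lambda$ and absorb $(\pi/2)^{(1+\varepsilon)/2}$ into a factor $2$, which yields $\frac{4E_p}{\pi}\bigl(1+2E_p\lambda_{\min}^{-\frac{1+\varepsilon}{2}}\bigr)^2$ rather than the lemma's stated $\frac{4E_p}{\pi}\bigl(1+E_p\lambda_{\min}^{-\frac{1+\varepsilon}{2}}\bigr)^2$; the paper avoids this by bounding $\cos^{\frac{1-\varepsilon}{2}}\lambda\,\cot^{\frac{1+\varepsilon}{2}}\lambda\le\lambda^{-\frac{1+\varepsilon}{2}}$ directly (i.e.\ using $\tan\lambda\ge\lambda$), so your version is a trivially repairable weakening — and in fact coincides with the $U_{2,\varepsilon}$ actually quoted in Theorem \ref{th:Phi}. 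Your reading of the definition of $E_p$ as a two-sided bound $|\disc_p(\lambda)|\le F_{E_p}(\lambda)$ is the intended one (the paper's own proofs, e.g.\ of Lemma \ref{lm:Ediam}, use it with the absolute value), so no extra argument is needed there.
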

\begin{proof}
We divide the interval $(0,\pi)$ into 3 parts and separately derive bounds on the left hand side of the inequality in three cases
\begin{itemize}
\item $0 \le \lambda \le \lambda_{\min}$:
In this interval $\disc_p(\lambda) = \frac{2}{\pi} \Xi(\lambda)$. We note that $\Xi(\lambda) \le \frac{\lambda^3}{3}$ and
$V(\lambda) \ge \frac{\lambda}{\pi}$ over the whole interval, thus
$$\textrm{lhs} \le \frac{2}{3}\pi^{\varepsilon}\lambda^{2-\varepsilon} \le \frac{2}{3}\pi^{\varepsilon}\lambda_{\min}^{2-\varepsilon}.$$ 
The last inequality holds because the expression is non--decreasing in $\lambda$.
\item $\lambda_{\min}\le \lambda \le \frac{\pi}{2} - E_p$: The numerator  is by assumption bounded by $\frac{4E_p}{\pi} \sin^2(\lambda + E_p)$ and $V(\lambda) = \sin\lambda$ in this interval, thus
\bea 
\fl\textrm{lhs} &\le & \frac{\frac{4E_p}{\pi}\sin^2(E_p + \lambda)}{\sin^{1+\varepsilon} \lambda} = \frac{\frac{4E_p}{\pi}\left(\sin\lambda \cos E_p + \cos\lambda \sin E_p\right)^2}{\sin^{1+\varepsilon} \lambda}   \nonumber\\
\fl&=& \frac{4E_p}{\pi} \left(\sin^{1/2 - \varepsilon/2}\lambda \cos E_p + \sin E_p \cos^{1/2 - \varepsilon/2}\lambda \cot^{1/2 + \varepsilon/2}\lambda\right)^2\nonumber. \eea
$\cot \lambda$ is bounded from above by $\frac{1}{\lambda}$ over the whole  interval considered, $\sin E_p$ by $E_p$, other trigonometric functions by 1. We get
\bea
\textrm{lhs} \le \frac{4E_p}{\pi} \left(1 + E_p\,\lambda^{-1/2 - \varepsilon/2}\right)^2 \le \frac{4E_p}{\pi} \left(1 + E_p\,\lambda_{\min}^{-1/2 - \varepsilon/2}\right)^2,\nonumber
\eea
the last inequality following from the fact that the expression is decreasing in $\lambda$. 
\item $\frac{\pi}{2} - E_p \le \lambda \le \pi$: It follows from the assumptions that $\left|\disc_p(\lambda)\right| \le \frac{4E_p}{\pi} \sin^2 (E_p 
+ \lambda) \le \frac{4E_p}{\pi}$. The denominator on the other hand is non--decreasing and therefore satisfies $V(\lambda) \ge V(\frac{\pi}{2} - E_p)$. Thus
$$\textrm{lhs} \le \frac{\frac{4E_p}{\pi}}{\sin^{1+\varepsilon}\left(\frac{\pi}{2}-E_p\right)} = \frac{4E_p}{\pi\,\cos^{1+\varepsilon}E_p}.$$
\end{itemize}
The three inequalities above, taken together, prove the lemma. \qed
\end{proof}
We come back to the main theorem. From Lemma \ref{lm:bounds1}  and (\ref{eq:Gineq}) it follows immediately that
\bea
|G(A,p)| \le U_\varepsilon(E_p, \lambda_{\min}) \int_0^\pi V(\lambda)^{1+\varepsilon}\left|\left(\frac{1}{\sin\frac{\lambda}{2}}\right)'\right|\,\dd\lambda.\label{eq:GUE1}\eea
By definition $E_p\le E$, and by assumption $E < \frac{\pi}{2}$. Since $U_\varepsilon$ is non--decreasing as a function of its first variable in the interval $(0, \frac{\pi}{2})$,   we can replace $E_p$ with $E$ in (\ref{eq:GUE1}). Substituting this inequality back to  (\ref{eq:r}) yields the result.\qed
\end{proof}

\section{Proof of Theorem \ref{th:M}}
\begin{proof}
The proof is quite similar to the previous one, with a few complications.
Recall that the relative difference  $\sigma_M$  is given by (\ref{eq:s}). $G(\widetilde A_n,x_n)$ after similar simplifications reads 
\bea
G(\widetilde A_n,x_n) =  \sum_{k=1,\, k\neq n}^N \frac{\alpha_k}{\alpha - \alpha_n}\,\frac{1}{\sin \frac{\lambda_n}{2}} 
 -\frac{2}{\pi}\int_{0}^{\pi} \frac{\sin^2\lambda}{\sin\frac{\lambda}{2}}\,\dd\lambda. \label{eq:Gnoxn1}
\eea
We apply the Hlawka--Zaremba inequality from \ref{app:Hlawka-Zaremba} and introduce the powers of function $V(\lambda)$ to get
\bea
G(\widetilde A_n,x_n) = \int_0^\pi V(\lambda)^{1+\varepsilon}\left(\frac{1}{\sin\frac{\lambda}{2}}\right)'\,\frac{\widetilde\disc_{x_n}(\lambda)}{V(\lambda)^{1+\varepsilon}}\,\dd\lambda. \label{eq:Gnoxn2}
\eea
 $\widetilde\disc_{x_n}(\lambda)$ denotes here the discrepancy function for the configuration $\widetilde A_n = A\backslash \{(x_m,\alpha_n)\}$. We would like to relate it to the discrepancy function of the whole $A$:
\bea
\widetilde\disc_{x_n}(\lambda) - \disc_{x_n}(\lambda) = \frac{-\alpha_n}{\alpha(\alpha - \alpha_n)} \sum_{k=1,\, k\neq n}^N  \alpha_k \theta(\lambda - \lambda_k) + \frac{\alpha_n}{\alpha}\theta(\lambda - \lambda_n) \nonumber.
\eea
It is easy to turn the equation above into the bound for the difference between both functions:
\bea
|\widetilde\disc_{x_n}(\lambda) - \disc_{x_n}(\lambda)| \le \left|\frac{-\alpha_n}{\alpha}\right|  + \frac{\alpha_n}{\alpha} = \frac{2\alpha_n}{\alpha} \nonumber
\eea
or
\bea
|\widetilde\disc_{x_n}(\lambda)|  \le  \frac{2\alpha_n}{\alpha} +| \disc_{x_n}(\lambda)|. \label{eq:tildedisc}
\eea

We are now ready to prove a modified version of Lemma \ref{lm:bounds1}:
\begin{lemma} \label{lm:bounds2}
Let $\widetilde \disc_p(\lambda)$ denote the discrepancy function for $A\backslash \left\{\left(x_n,\alpha_n\right)\right\}$. Let  $\delta_{n} =  \min_{m=1,\dots,N, m\neq n} \Lambda(x_m,x_n)$ and let $E$ be the modified discrepancy of $A$. Let $\varepsilon$ satisfy $0 < \varepsilon \le 1$.
Let $V(\lambda)$ denote the function 
\bea
V(\lambda) = \left\{\begin{array}{lr} \sin\lambda & \textrm{for} \quad 0 \le \lambda \le \frac{\pi}{2} \\
                          1 & \textrm{for} \quad  \frac{\pi}{2} < \lambda \le \pi \end{array}\right. .\nonumber
\eea
 Then, assuming that $E + \delta_{n} \le \frac{\pi}{2}$, we have
\bea
\fl\frac{\left|\widetilde\disc_{x_n}(\lambda)\right|}{V(\lambda)^{1+\varepsilon}} \le \nonumber 
\max&&\left(\frac{2\pi^\varepsilon}{3}\delta_n^{2-\varepsilon},
\frac{4E}{\pi}\left(1 + E\,\delta_n^{-\frac{1+\varepsilon}{2}}\right)^2 + 2\pi^{1+\varepsilon}\,\frac{\alpha_{\max}}{\alpha}\,\delta_n^{-1-\varepsilon},\right.\\
&& \left.
\frac{\frac{4E}{\pi} + 2\alpha_{\max}/\alpha}{\cos^{1+\varepsilon}E}\right)\nonumber
\eea
\end{lemma}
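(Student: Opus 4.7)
The plan is to mirror the three-interval argument of Lemma \ref{lm:bounds1}, with the roles of $\lambda_{\min}$ and $E_p$ replaced by $\delta_n$ and $E$, while accounting for the fact that $\widetilde\disc_{x_n}$ differs from $\disc_{x_n}$ in two ways: the puncture at $x_n$ itself is removed, and the remaining weights are rescaled from $\alpha_k/\alpha$ to $\alpha_k/(\alpha-\alpha_n)$. The starting point is the pointwise bound (\ref{eq:tildedisc}), namely
\bea
|\widetilde\disc_{x_n}(\lambda)| \le |\disc_{x_n}(\lambda)| + \frac{2\alpha_n}{\alpha}, \nonumber
\eea
which reduces the problem to controlling $|\disc_{x_n}|$ (via the modified discrepancy of the full configuration $A$) plus a constant correction term.

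First, on $0 \le \lambda \le \delta_n$ no puncture of $\widetilde A_n$ lies within geodesic distance $\lambda$ from $x_n$, so $\widetilde\disc_{x_n}(\lambda) = \frac{2}{\pi}\Xi(\lambda)$ identically. The argument from the first case of Lemma \ref{lm:bounds1}, using $\Xi(\lambda) \le \lambda^3/3$ and $V(\lambda) \ge \lambda/\pi$, then yields the bound $\frac{2\pi^\varepsilon}{3}\delta_n^{2-\varepsilon}$ without any extra term (no correction is needed here because the bound $|\widetilde\disc_{x_n}|$ is direct).

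Next, on $\delta_n \le \lambda \le \frac{\pi}{2} - E$, I would bound $|\disc_{x_n}(\lambda)|$ by $\frac{4E}{\pi}\sin^2(\lambda + E)$ using the modified discrepancy (and the fact that $E_{x_n} \le E$), then reproduce verbatim the trigonometric manipulation of Lemma \ref{lm:bounds1} to obtain $\frac{4E}{\pi}(1 + E\,\delta_n^{-(1+\varepsilon)/2})^2$ for this part. For the correction term I use $V(\lambda) = \sin\lambda \ge \lambda/\pi$, giving $\frac{2\alpha_n/\alpha}{V(\lambda)^{1+\varepsilon}} \le 2\pi^{1+\varepsilon}(\alpha_n/\alpha)\lambda^{-1-\varepsilon}$, which is decreasing in $\lambda$ and hence bounded by $2\pi^{1+\varepsilon}(\alpha_{\max}/\alpha)\delta_n^{-1-\varepsilon}$. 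Summing gives the $W_{2,\varepsilon}$-type bound. On $\frac{\pi}{2} - E \le \lambda \le \pi$, $V(\lambda)$ is non-decreasing and bounded below by $\cos E$; combined with $|\disc_{x_n}(\lambda)| \le 4E/\pi$ and $\alpha_n \le \alpha_{\max}$, this yields the third bound $\bigl(\frac{4E}{\pi} + 2\alpha_{\max}/\alpha\bigr)\cos^{-1-\varepsilon}E$.

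The only non-routine step is the middle interval, where both the singular part of $F_E$ and the new correction term $2\alpha_n/\alpha$ produce powers of $\delta_n^{-1}$; the key observation that makes things work is that $\delta_n$ is by definition strictly positive for each $n$, and the condition $E + \delta_n \le \pi/2$ ensures the interval $[\delta_n,\frac{\pi}{2}-E]$ is non-empty and that the monotonicity arguments apply. Taking the maximum of the three bounds yields the statement of Lemma \ref{lm:bounds2}. \qed
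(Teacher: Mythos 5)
Your proposal is correct and follows essentially the same route as the paper: the same three-interval decomposition with breakpoints at $\delta_n$ and $\frac{\pi}{2}-E$, the same reduction via $|\widetilde\disc_{x_n}(\lambda)| \le |\disc_{x_n}(\lambda)| + 2\alpha_n/\alpha$, and the same bounds on the correction term using $V(\lambda) \ge \lambda/\pi$ on the middle interval and $V(\lambda) \ge \cos E$ on the last. The only cosmetic difference is that the paper works with $E_{x_n}$ and passes to $E$ by monotonicity afterwards, whereas you use $E$ throughout; both are valid.
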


\begin{proof}
We divide the interval $(0,\pi)$ into 3 parts and separately derive bounds on the left hand side of the inequality in three cases
\begin{itemize}
\item $0 \le \lambda \le \delta_n$:
We repeat the reasoning from the proof of Lemma \ref{lm:bounds1}. In this interval $\widetilde\disc_p(\lambda) = \frac{2}{\pi} \Xi(\lambda)$, so 
$$
\textrm{lhs} \le   \frac{2}{3}\pi^{\varepsilon}\delta_n^{2-\varepsilon}.$$ 
\item $\delta_{n}\le \lambda \le \frac{\pi}{2} - E_{x_n}$: From (\ref{eq:tildedisc}) we know that 
\bea
\textrm{lhs} \le \left|\frac{\disc_{x_n}(\lambda)}{V(\lambda)^{1+\varepsilon}}\right| + \frac{2\alpha_n / \alpha}{V(\lambda)^{1+\varepsilon}}.\label{eq:splitbounds}
\eea
The first term can be bounded just like in the proof of Lemma \ref{lm:bounds1}, the second by $\frac{2\alpha_n / \alpha}{V(\delta_n)^{1+\varepsilon}}$. Using the fact that $V(\lambda) \ge \frac{\lambda}{\pi}$ we can bound this expression by  $\frac{2\alpha_n / \alpha}{\left(\delta_n/\pi\right)^{1+\varepsilon}}$. Taken together, these estimates prove that
\bea
\textrm{lhs} \le \frac{4E_{x_n}}{\pi}\left(1 + E_{x_n} \delta_n^{-\frac{1+\varepsilon}{2}}\right)^2 + 2\pi^{1+\varepsilon}\frac{\alpha_n}{ \alpha}\delta_n^{-1-\varepsilon}.\nonumber
\eea

\item $\frac{\pi}{2} - E_{x_n} \le \lambda \le \pi$: Again we split the left hand side according to (\ref{eq:splitbounds})
and use the proof of Lemma \ref{lm:bounds1} to bound the first term. The second one on the other hand satisfies
$$
\frac{2\alpha_n/\alpha}{V(\lambda)^{1+\varepsilon}} \le  \frac{2\alpha_n/\alpha}{\sin^{1+\varepsilon}\left(\frac{\pi}{2} - E_{x_n}\right)} = \frac{2\alpha_n/\alpha}{\cos^{1+\varepsilon}\ E_{x_n}},
$$
the inequality due to the fact that $V(\lambda)$ is non--decreasing. It follows that
$$
\frac{2\alpha_n/\alpha}{V(\lambda)^{1+\varepsilon}} \le \frac{\frac{4E_{x_n}}{\pi} + 2\alpha_{\max}/\alpha}{\cos^{1+\varepsilon}E_{x_n}}.
$$
\end{itemize}
The three inequalities above, taken together, prove the lemma. \qed
\end{proof}

With Lemma \ref{lm:bounds2} established equation (\ref{eq:Gnoxn2}) yields the following estimate:
\bea
\fl|G(\widetilde A_n,x_n)| &\le&  \sup  \frac{|\widetilde\disc_{x_n}(\lambda)|}{V(\lambda)^{1+\varepsilon}}  \int_0^\pi V(\lambda)^{1+\varepsilon}\left|\left(\frac{1}{\sin\frac{\lambda}{2}}\right)'\right|\,\dd\lambda \nonumber \\
\fl&\le & W_\varepsilon\left(E_{x_n},\frac{\alpha_n}{\alpha},\delta_n,\delta_n\right)\,\int_0^\pi V(\lambda)^{1+\varepsilon}\left|\left(\frac{1}{\sin\frac{\lambda}{2}}\right)'\right|\,\dd\lambda \nonumber
\eea
By definition $E_{x_n}\le E$, and by assumption $E < \frac{\pi}{2}$. We note that $W_\varepsilon$ is non--decreasing as a function of its first three variables in the  and non--increasing in the fourth one as long as the first one is smaller than $\frac{\pi}{2}$. This condition is satisfied by assumption and therefore  the maximum over all 
punctures, appearing in (\ref{eq:s}), can be estimated as well:
\bea
\fl\max_{n=1\dots N} |G(\widetilde A_n,x_n)| \le W_\varepsilon\left(E,\frac{\alpha_{\max}}{\alpha},\delta_{\min},\delta_{\max}\right)\,\int_0^\pi V(\lambda)^{1+\varepsilon}\left|\left(\frac{1}{\sin\frac{\lambda}{2}}\right)'\right|\,\dd\lambda \nonumber.
\eea
Combining this inequality with (\ref{eq:s}) completes the proof. \qed
\end{proof}

\section{Configurations with a continuum limit}\label{sec:sequence}

In this section we will present an infinite sequence of configurations in which an increasing number of black holes covers uniformly
$S^3$. The main challenge is to make sure that the distribution is sufficiently uniform so that the global modified  discrepancy converges to 0 sufficiently fast.  We will later apply  Theorems \ref{th:Phi} and \ref{th:M} to establish the existence of the continuum limit and 
the vanishing of backreaction for this sequence. 
We begin by introducing a special class of configurations constructed from partitions of $S^3$. By a partition of $S^3$ we understand a
finite set $\{P_i\}$ of subsets of $S^3$ such that
\begin{enumerate}
\item $\bigcup_i P_i = S^3$,
\item each $P_i$ is closed, simply connected and has non--vanishing volume with respect to the "round" volume form,
\item for each pair $P_i$, $P_j$, $i\neq j$, their intersection is either empty or contained within the boundaries of both $P_i$ and $P_j$.
\end{enumerate}

The construction of $A$ proceeds as follows: for a fixed partition we pick up the punctures $x_i$ such that every $x_i \in P_i$. We then fix the mass parameters $\alpha_i$
by demanding that they are proportional to the volume of its corresponding domain $P_i$, i.e.
\bea
\alpha_i = \frac{\alpha}{2\pi^2}\vol (P_i) \label{eq:alphai1}
\eea 
with arbitrary, positive $\alpha$.

Note that all 6 regular configurations can be obtained from a construction of this type. Namely, we need to  take $P_i$ to be one of 6 regular tessellations of $S^3$ and put the punctures at the 
geometric center of each domain. 

The key feature of the configurations constructed from partitions is that we may bound their global modified discrepancy in a very simple way:
\begin{lemma} \label{lm:Ediam}
Let $A$ be a configuration constructed from a partition $\{P_i\}$ as described above. Then the global modified discrepancy satisfies the following bound:
\bea
E \le \diam\left(\{P_i\}\right),\nonumber
\eea
where $\diam$ is the diameter of the partition, i.e.
\bea
\diam\left(\left\{P_i\right\}\right) = \max_{i=1..N} \sup_{a,b \in P_i} \Lambda(a,b). \label{eq:diamdef}
\eea
\end{lemma}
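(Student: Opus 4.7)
The plan is to reduce the bound on $E$ to a purely geometric sandwich of $\bigcup_{x_i \in B(p,\lambda)} P_i$ between two slightly larger or smaller spherical caps, and then to compare the resulting $|\disc_p(\lambda)|$ with $F_d(\lambda)$ using elementary monotonicity of $\sin^2$ together with the identity $\vol(B(p,\lambda)) = 4\pi\,\Xi(\lambda)$. The hypothesis $\alpha_i = \frac{\alpha}{2\pi^2}\vol(P_i)$ is crucial here: it converts the combinatorial sum in (\ref{eq:disc}) into $\sum_{x_i \in B(p,\lambda)} \alpha_i/\alpha = \frac{1}{2\pi^2}\sum_{x_i \in B(p,\lambda)} \vol(P_i)$, i.e.\ the round volume of the union of those partition cells whose marker falls inside the cap.

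Set $d = \diam(\{P_i\})$. The key geometric step is the inclusion
\[
B(p,\lambda - d) \;\subseteq\; \bigcup_{x_i \in B(p,\lambda)} P_i \;\subseteq\; B(p,\lambda + d),
\]
valid for every $p \in S^3$ and $\lambda \in [0,\pi]$ (with the left cap empty when $\lambda < d$ and the right cap equal to $S^3$ when $\lambda > \pi - d$). Both inclusions follow from the triangle inequality: a point of $P_i$ lies within $d$ of $x_i$ and hence within $\lambda + d$ of $p$; conversely any $q \in B(p,\lambda - d)$ belongs to some $P_j$, and then $\Lambda(p,x_j) \le \Lambda(p,q) + \Lambda(q,x_j) \le (\lambda - d) + d = \lambda$. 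Since the $P_i$ have pairwise disjoint interiors, volumes are additive on the union, so after dividing by $2\pi^2$ and using $\vol(B(p,\lambda)) = 4\pi\,\Xi(\lambda)$ the sandwich becomes
\[
\frac{2\,\Xi(\max(\lambda - d,\,0))}{\pi} \;\le\; \sum_{x_i \in B(p,\lambda)} \frac{\alpha_i}{\alpha} \;\le\; \frac{2\,\Xi(\min(\lambda + d,\,\pi))}{\pi}.
\]
Subtracting $\frac{2}{\pi}\Xi(\lambda)$ and taking absolute values yields
\[
|\disc_p(\lambda)| \;\le\; \frac{2}{\pi}\max\left\{\int_\lambda^{\min(\lambda+d,\pi)}\sin^2 x\,\dd x,\;\; \int_{\max(\lambda - d,0)}^{\lambda}\sin^2 x\,\dd x\right\}.
\]

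It then remains to verify $|\disc_p(\lambda)| \le F_d(\lambda)$ by a case split matched to the two pieces of $F_d$. For $\lambda + d \le \pi/2$ the monotonicity of $\sin^2$ on $[0,\pi/2]$ bounds both integrands by $\sin^2(\lambda + d)$, giving $|\disc_p(\lambda)| \le \frac{2d\sin^2(\lambda + d)}{\pi} \le \frac{4d\sin^2(\lambda + d)}{\pi} = F_d(\lambda)$. For $\lambda + d > \pi/2$ the trivial bound $\sin^2 x \le 1$ gives $|\disc_p(\lambda)| \le \frac{2d}{\pi} \le \frac{4d}{\pi} = F_d(\lambda)$. Thus $\disc_p$ satisfies the defining inequality of the modified discrepancy with $D = d$, so $E_p \le d$ for every $p$ and hence $E \le d$. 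The main obstacle is really just isolating the geometric sandwich; after that the arithmetic is direct, with the only mild subtlety at $\lambda = 0$ when $p$ itself coincides with a puncture $x_j$ and $\disc_p$ jumps by $\alpha_j/\alpha$, but in that case the sandwich still applies via $\{p\} \subseteq P_j \subseteq B(x_j,d)$ and bounds the jump by the same $F_d(0)$.
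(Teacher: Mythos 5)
Your proof is correct and follows essentially the same route as the paper: both arguments sandwich the weighted puncture count in $B(p,\lambda)$ between the volumes of $B(p,\lambda-d)$ and $B(p,\lambda+d)$ and then compare the resulting difference of $\Xi$-values with $F_d(\lambda)$ via the monotonicity/convexity of $\Xi$ on $[0,\pi/2]$. The only cosmetic differences are that you obtain the cap inclusions directly from the triangle inequality rather than through the paper's $I_{\rm in}/I_0/I_{\rm out}$ classification of cells, and your bound by the maximum of the two one-sided integrals (instead of their sum) leaves a factor of $2$ to spare against $F_d$.
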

\begin{proof}

Let $p$ be a point in $S^3$ and let $\lambda$ be a number satisfying $0 \le \lambda \le \pi$. Consider the cap $B(p,\lambda)$. The domains $P_i$ can be divided into
three classes: those which lie entirely within $B(p,\lambda)$, those which lie entirely outside, and the rest.  The set of their indices will be denoted by $I_{\rm in}$, $I_{\rm out}$ and $I_0$ respectively (see figure \ref{fig:domains}). Note that the domains in the third class must have a non--vanishing intersection with the boundary of  $B(p,\lambda)$.

\bfi
\bce
\includegraphics[width=0.80\textwidth]{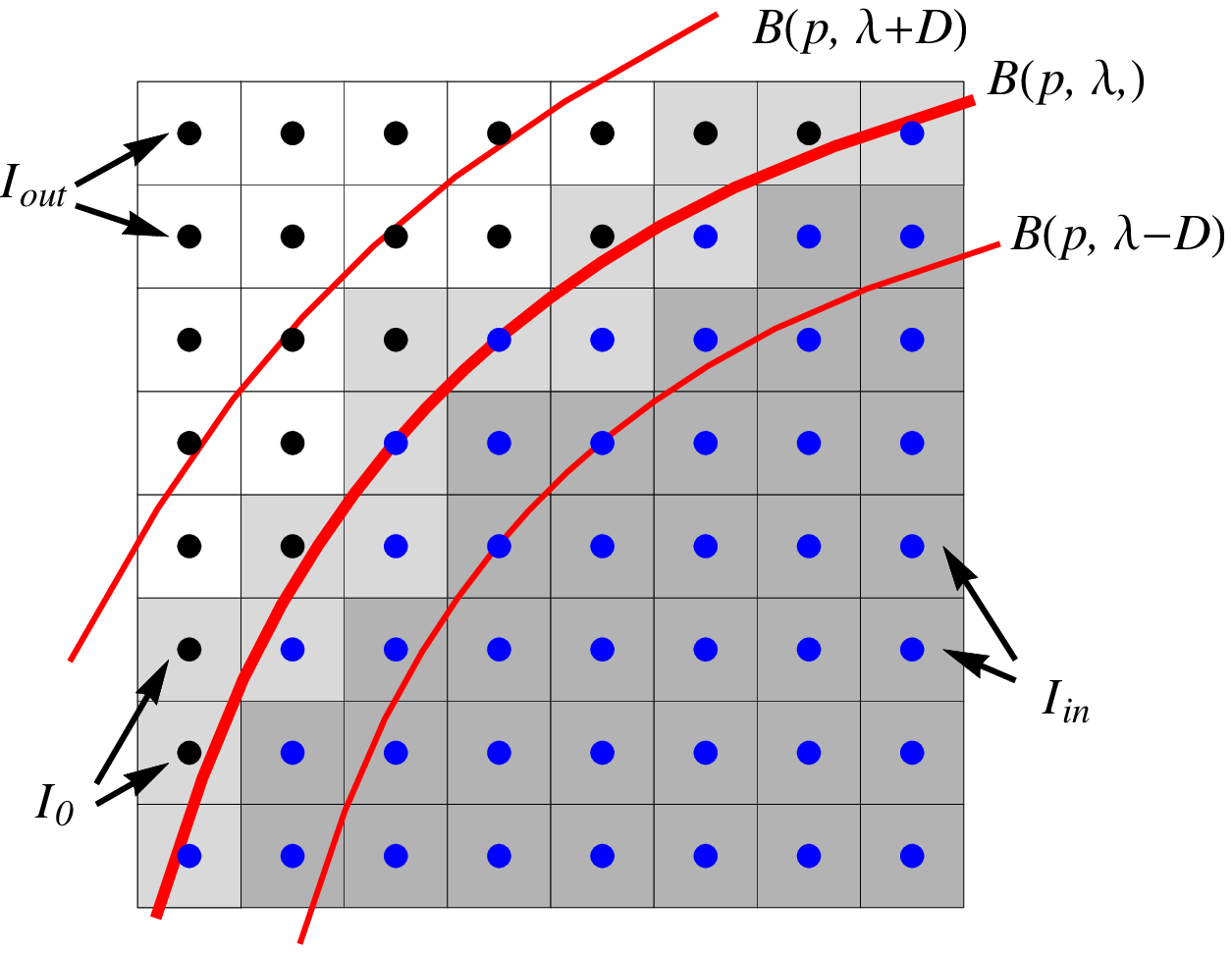}
\caption{Domains within $I_\textrm{in}$, $I_\textrm{out}$ and $I_0$ (shades of grey). Black dots denote the punctures lying outside $B(p,\lambda)$, while the blue one denote those which lie within. Moreover, all domains in $I_0$ lie within the hollow cap between $B(p,\lambda - D)$ and $\tilde B(p,\lambda - D)$}
\label{fig:domains}
\ece
\efi
Consider now the sum of the mass parameters of all punctures lying within the cap $B$.  It can be decomposed into the sum over $I_{\rm in}$ and over a part of $I_0$ (see figure \ref{fig:domains}):
\bea
\sum_{x_i\in B(p,\lambda)} \alpha_i = \sum_{i\in \Iin} \alpha_i + \sum_{i\in I_0,\,x_i\in B(p,\lambda)} \alpha_i \label{eq:summ1}.
\eea
On the other hand
\bea
 \sum_{i\in I_0, \, x_i\in B(p,\lambda)} \alpha_i \le \sum_{i\in I_0} \alpha_i \label{eq:summ2}.
\eea
Taking (\ref{eq:summ1}) and (\ref{eq:summ2}) together we obtain the following bound:
\bea
\sum_{i\in \Iin} \alpha_i \le \sum_{x_i\in B(p,\lambda)} \alpha_i  \le \sum_{i\in \Iin} \alpha_i + \sum_{i\in I_0} \alpha_i . \label{eq:summ3}
\eea

The lower and upper bounds are by assumption equal to $\frac{\alpha}{2\pi^2}\sum_{i\in \Iin} \vol(P_i)$ and $\frac{\alpha}{2\pi^2}\left(\sum_{i\in \Iin} \vol(P_i) + \sum_{i\in I_0} \vol(P_i)\right)$ 
respectively (see equation (\ref{eq:alphai1})). Note that all points of the union $\bigcup_{i\in I_0} P_i$ lie not further than $D = \diam(\{P_i\})$ from the boundary of $B(p,\lambda)$ and therefore 
they are all contained within the (hollow) cap given by $C=\{x\in S^3\left|\right.\lambda - D \le \Lambda(x,p) \le \lambda + D\}$ (see again figure \ref{fig:domains}). Two inclusions follow easily: $\tilde B(p,\lambda - D) \subset \bigcup_{i\in \Iin} P_i$ and $ \bigcup_{i\in \Iin} P_i \cup \bigcup_{i \in I_0} P_i \subset B(p,\lambda + D)$, where $\tilde B(p,\xi)$ denotes the cap $B(p,\xi)$ without its boundary, i.e. $\tilde B(p,\xi) = \{x\in S^3\left|\right. \Lambda(p,x) < \xi\}$. Thus $\vol(B(p,\lambda - D)) \le \sum_{i\in \Iin} \vol(P_i)$ and $\sum_{i\in \Iin} \vol(P_i) + \sum_{i\in I_0} \vol(P_i) \le \vol(B(p,\lambda + D))$. Combining this with (\ref{eq:summ3}) we obtain
\bea
\frac{\alpha}{2\pi^2}\vol(B(p,\lambda-D)) \le \sum_{x_i\in B(p,\lambda)} \alpha_i  \le \frac{\alpha}{2\pi^2}\vol(B(p,\lambda+D))  \label{eq:summ4}
\eea
or 
\bea
\vol(B(p,\lambda-D)) \le \frac{2\pi^2}{\alpha}\sum_{x_i\in B(p,\lambda)} \alpha_i  \le \vol(B(p,\lambda+D))  \label{eq:summ4b}
\eea
We will now use the bounds above to estimate the discrepancy function. Since the volume of a cap is non--decreasing with radius we have
\bea
\vol(B(p,\lambda-D)) \le \vol(B(p,\lambda))  \le \vol(B(p,\lambda+D)) . \nonumber
\eea
Thus both $\vol(B(p,\lambda))$ and $\frac{2\pi^2}{\alpha}\sum_{x_i\in B(p,\lambda)} \alpha_i$ lie inside the interval with endpoints at $\vol(B(p,\lambda-D))$ and 
$\vol(B(p,\lambda+D))$, so the modulus of their difference cannot exceed the length of the interval:
\bea
\left|\frac{2\pi^2}{\alpha}\sum_{x_i\in B(p,\lambda)} \alpha_i-\vol(B(p,\lambda))\right| \le \vol(B(p,\lambda+D))-\vol(B(p,\lambda-D)), \nonumber
\eea
or, after dividing by $2\pi^2$,
\bea
\left|\disc_p(\lambda)\right| \le \frac{2}{\pi}\left(\Xi(\lambda+D)-\Xi(\lambda-D)\right).\nonumber
\eea
Note that in the inequality above we have extended function $\Xi(x)$ to a continuously differentiable function over the whole $\Rr$ by assuming that it vanishes  for $x < 0$ and is equal to $\frac{\pi}{2}$ for $x > \pi$. 
$\Xi(x)$ is convex for $x \le \frac{\pi}{2}$, so if $\lambda \le \frac{\pi}{2} - D$, then the right hand side can be bounded by $\frac{2}{\pi}(2D\cdot \Xi'(\lambda + D)) = \frac{4D}{\pi}\sin^2(\lambda + D)$. On the other hand, if $\lambda > \frac{\pi}{2} - D$, then the right hand side is bounded by $\frac{2}{\pi}(2D\cdot \sup \Xi'(x))$. Since $\Xi'(x)  = \sin^2 x $ or $0$, the supremum is equal to 1 and $|\disc_p(\lambda)|$ is bounded by $\frac{4D}{\pi}$. Thus 
\bea
\left|\disc_p(\lambda)\right| \le F_D(\lambda)\nonumber
\eea
and, what follows easily, $E_p \le D$. Since this holds for all points $p$, we have $E \le D$.
This completes the proof. \qed 
\end{proof} 

\subsection{Construction of the sequence}

We are ready to discuss the construction of the infinite sequence $H_N$ of configurations which approach the FLRW continuum limit as $N \to \infty$. 
We begin repeating partially the construction leading to the universe with a 8-black-hole regular lattice \cite{Bentivegna:2012ei}, i.e. dividing $S^3$ into eight identical, quasi-cubic domains. Let $S^3$ be embedded in  $\Rr^4$ as a
sphere of unit radius. We take eight points  $(\pm 1, 0, 0, 0)$, $(0, \pm 1 , 0,0)$, $(0,0,\pm 1,0)$, $(0,0,0,\pm 1)$, denoted by $r_i$, 
corresponding to the vertices of an inscribed regular polytope, i.e. a 16--cell. The eight domains $\Sigma_i$, $i=1\dots 8$, are defined
as the Voronoi cells of $r_i$'s, i.e. $\Sigma_k = \{x \in S^3 | \Lambda(x,r_k) \le \Lambda(x,r_i)\quad \textrm{for all }i\neq k\}$.
The cells are quasi--cubes with 6 faces, 12 edges and 8 vertices \cite{Bentivegna:2012ei}. The partition we obtain is quite similar to the  tessellation of $\Rr^3$ with cubes, however note that only 4 instead of 6 cell edges meet at each vertex. 

We will now subdivide each of the cells into subregions. We first introduce a coordinate system on each of the cells using a 
projection centered at the origin onto a three--dimensional subspace. Consider the cell around the point $(1,0,0,0)$ and define the mapping
\bea
\psi: S^3 \ni (x_1,x_2,x_3,x_4) \mapsto (x_2/x_1,x_3/x_1,x_4/x_1) \in \Rr^3 \label{eq:Psiproj}
\eea
The projection does not conserve  angles or distances, but it maps great circles and great spheres into straight lines and planes respectively. 
In particular, it maps the cell in question into a true cube given by $-1 \le z_1 \le 1$, $-1 \le z_2 \le 1$, $-1 \le z_3 \le 1$, $z_a$ being the Cartesian coordinates on the target $\Rr^3$. The resulting cube in $\Rr^3$ can be subdivided into $n^3$ identical cubes $\Pi_i$ in a straightforward manner. This partition can now be pulled back to $S^3$ as a partition of the big quasi--cube into $n^3$ non--identical small  quasi--cubes whose faces are made of big spheres (see figure \ref{fig:projection}). Let $q_i$ denote the centers of each cube. We will place the punctures on $S^3$ at $p_i = \psi^{-1}(q_i)$  (see again figure \ref{fig:projection}). The construction
can be repeated for all eight quasi--cubes, yielding ultimately a partition of $S^3$ into $N=8n^3$ domains $P_i$ and punctures $p_i$
within the corresponding domains. Following the prescription from the previous subsection we  take $\alpha_i = \vol(P_i) \frac{\alpha}{2\pi^2}$, with arbitrary positive $\alpha$ setting the scale. 
\bfi
\bce
\includegraphics[width=0.7\textwidth]{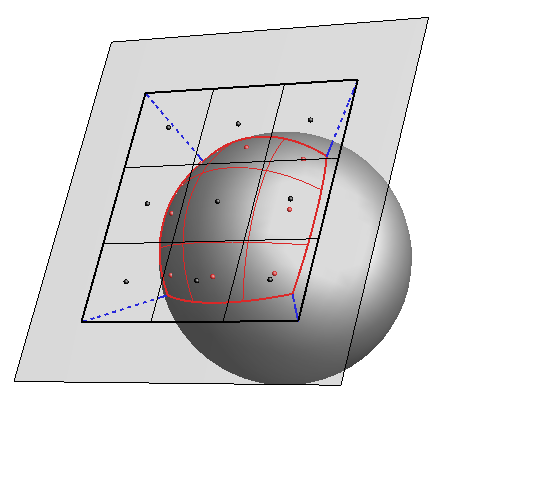}
\caption{Construction of the partition of a big quasi--cube in $S^3$ and the location of the punctures (red) via the projection  of the partition of the corresponding cube by $\psi$.  One dimension suppressed.}
\label{fig:projection}
\ece
\efi

The configurations $H_N$ we constructed above, each containing $N=8n^3$ black holes, form an infinite sequence whose  first elements are $H_8$, $H_{64}$, $H_{216}$, $H_{512}$, $H_{1000}$ and $H_{1728}$. $H_8$ is identical to the original regular 8-black-hole model $R_8$ and
all subsequent configurations inherit the whole group of discrete symmetries of  $R_8$. In particular, it is noteworthy that the edges of the big quasi--cubes in all $H_N$  are locally rotationally symmetric curves in the sense of \cite{Clifton:2013jpa} and the exact, relativistic evolution along them can be investigated using the same methods.

\subsection{Estimates of the parameters of the constructed lattice}
We will now explain in what sense the configurations $H_N$ approach the continuum limit as $N\to\infty$. In order to make use of Theorems \ref{th:Phi} and \ref{th:M} we need to estimate their diameter of the partition, as well as $\lambda_{\min}$, $\delta_{\min}$, $\alpha_{\max}$ and $\delta_{\max}$. 
Obviously the geometry of the quasi--cubic lattice on $S^3$ is significantly more complicated than the geometry of the corresponding cubic lattice on $\Rr^3$. It is therefore a reasonable strategy to relate all parameters of $\{P_i\}$ to the corresponding parameters of
 $\{\Pi_i\}$. 
 
 We will  proceed in the following way: first, we will estimate $\delta_{\min}$ and $\delta_{\max}$ by certain maxima or minima over all \emph{single} small quasi--cubes of the partition of $S^3$. This is necessary because unlike $\diam(\{P_i\})$ or $\mu_{\min}$ in these two we take supremum or infimum over \emph{pairs} of small quasi--cubes.  Note that the partition of $S^3$ consists of 8 identical but distinct partitions of big quasi--cubes and if the parameter in question refers to a pair of domains, then we would have to consider separately what happens if the small quasi--cubes belong to two different large quasi--cubes. This is not needed if we minimize or maximize over individual cells because each minimum or maximum over a single Voronoi cell is equal to the maximum or minimum over the whole $S^3$.
 
 In the second step we will apply  the following inequality, proved in  \ref{app:distances}, estimating the distance $\Lambda(\cdot,\cdot)$ by
the Euclidean distance $D(\cdot,\cdot)$ on $\Rr^3$: for all $x,y$ inside the same single big quasi--cube  we have
\bea
\frac{1}{4} D(\Psi(x),\Psi(y)) \le \Lambda(x,y) \le  D(\Psi(x),\Psi(y)), \label{eq:DLD}
\eea
where $\Psi$ is the corresponding projection of the big quasi--cube. 
We will use (\ref{eq:DLD}) to estimate the parameters by analogous parameters of the cubic partition in $\Rr^3$. These in turn  are
straightforward to  read out thanks to their very simple geometry.

Recall the definition of the diameter of $\{P_i\}$ (\ref{eq:diamdef}). 
From (\ref{eq:DLD}) we have easily
\bea
\Lambda(x,y) \le D(\Psi(x),\Psi(y))\nonumber
\eea
for all $x$ and $y$ inside the same big quasi--cube. Moreover, if $x,y \in P_i$, then $\Psi(x),\Psi(y) \in \Pi_i$. Thus the inequality carries over to the supremum
taken over all $x,y$ in $P_i$ and over all $P_i$'s. Therefore
\bea
\diam(\{P_i\}) \le \diam(\{\Pi_i\}),\nonumber
\eea
where the second diameter is calculated with respect to the flat metric.  The latter diameter is equal to the length of the space diagonal of a single cube, equal to $\frac{2\sqrt{3}}{n}$. Thus
\bea
E \le \diam(\{P_{i}\}) \le \frac{2\sqrt{3}}{n} = 4\sqrt{3}{N^{-1/3}},\nonumber
\eea
we have reexpressed here the bound by the number of punctures.

Other parameters can be estimated in a similar manner. Let $P_i$ be a domain and $P_j$ a  neighboring domain, and let $u$ be a point in the boundary of both domains, or $u \in \partial P_i \cap \partial P_j$. The triangle inequality yields
\bea
\Lambda(x_i,x_j) \le \Lambda(x_i,u)  +  \Lambda(u,x_j) .\nonumber
\eea
Both terms in the last expression are bounded from above by the partition diameter. Thus
\bea
\delta_{\max} &=& \max_i \min_{j\neq i} \Lambda(x_i,x_j)  \le 2\,\diam(\{P_i\})  \nonumber\\ &\le&  2\,\diam(\{\Pi_i\}) = 8\sqrt{3} N^{-1/3},\nonumber
\eea
the last two inequalities have been proved above.

Let now $P_i$ and $P_j$ be two distinct small quasi--cubical domains. Let $r$ and $s$ denote
the intersections of the shortest geodesic joining $x_i$ and $x_j$ with $\partial P_i$ and $\partial P_j$ respectively ($r$ might be the same as $s$, but does not have to). Then 
\bea
\Lambda(x_i,x_j) = \Lambda(x_i,r)  + \Lambda(r,s) + \Lambda(s,x_j) \ge  \Lambda(x_i,r) + \Lambda(s,x_j).\nonumber
\eea
Both latter terms are bounded by the smallest distance between $x_i$ and the boundary of its domain:
\bea
\Lambda(x_i,r) \ge  \min_k \inf_{s\in \partial P_k}  \Lambda(s,x_k) \ge \frac{1}{4} \min_k \inf_{t\in \partial \Pi_k} D(t,q_k),\nonumber
\eea
the last inequality follows from (\ref{eq:DLD}). The last term is simply the distance between the center of a small cube and its face,
i.e. $\frac{1}{n}$, thus
\bea
\delta_{\min} = \min_i \min_{j\neq i} \Lambda(x_i,x_j) \ge\frac{2}{4n}  =  N^{-1/3}.\nonumber
\eea

Finally consider a point $p$ lying on one of the domain boundaries. Then 
\bea
\lambda_{\min} = \min_i \Lambda(p,x_i) \ge \min_i \inf_{r\in \partial P_i} \Lambda(r,x_i)  .\nonumber
\eea
Again we make use of (\ref{eq:DLD}):
\bea
 \min_i \inf_{r\in \partial P_i} \Lambda(r,x_i) \ge \frac{1}{4}  \min_i \inf_{s\in \partial \Pi_i} D(s,q_i) = \frac{1}{2}N^{-1/3}  .\nonumber
\eea
Last equality was proved above.

Let us finally have a look at $\alpha_i$. In  \ref{app:distances} we prove an estimate for the volume of a set measured by $\eta$ and the corresponding volume
on the target flat space of the projection $\Psi$. From (\ref{eq:volumes}) we obtain
$\vol(P_i) \le \vol_{\Rr^3}(\Pi_i) = 8 n^{-3} = 64 N^{-1}$  
and hence $2\pi^2\frac{\alpha_i}{\alpha} \le  64 N^{-1}$, so
\bea
\alpha_{\max} \le \frac{64\alpha}{2\pi^2} N^{-1}.\nonumber
\eea 

\subsection {The continuum limit and asymptotics}
We are ready to apply Theorems \ref{th:Phi}  and  \ref{th:M}  to configurations $H_N$ in order to understand their properties  for large $N$. 

The distance from any point $p$ and the nearest black hole $\lambda_{\min}$ cannot exceed the diameter of the partition which scales like 
$O(N^{-1/3})$. We may of course consider points which approach a black hole faster than that, i.e. like $O(N^{-\alpha})$ with any
$\alpha \ge \frac{1}{3}$. 

Fix $\alpha$ such that $\frac{1}{3} < \alpha < 1$ and a positive constant $C$. Let $J^\alpha_{C,N}$ denote the set of points which lie further that $C N^{-\alpha}$
from the nearest black hole, i.e. $J^\alpha_{C,N} = \left\{p \in S^3 | \min_n \Lambda(x_n,p) \ge C \cdot N^{-\alpha}\right\}$.
If we now plug in $E=O(N^{-1/3}$ and $\lambda_{\min} = O(N^{-\alpha})$ into Theorem \ref{th:Phi}, we see that  for the points in $J^\alpha_{C,N}$ the relative deviation of $\Phi$ from the mean satisfies $\sigma_\Phi \le \max(O(N^{-1/3}),O(N^{\alpha(1+\varepsilon) - 1}))$. We still have the freedom to choose $\varepsilon$. Note that the second exponent is negative if we choose it to be small enough, so we conclude that for points in $J^\alpha_{C,N}$ the conformal factor, and thus also the metric tensor itself, tends to the metric tensor 
of the corresponding FLRW model. 

Consider now the volume occupied by $J^\alpha_{C,N}$ and measured with respect to the $S^3$ volume form. 
Recall that the total volume of $S^3$ is equal to $2\pi^2$. Obviously the volume of the compliment of $J^\alpha_{C,N}$ is bounded from above by the sum of volumes $N$
balls centered at $x_n$ of radius $C N^{-\alpha}$, so the volume of $J^\alpha_{C,N}$ itself satisfies
\bea
\vol(J^\alpha_{C,N}) \le 2\pi^2 - \sum_{n=1}^N \vol(B(x_n,C N^{-\alpha}) = 2\pi^2 - O(N^{-3\alpha + 1}).\nonumber
\eea
This expression goes to $2\pi^2$ for $\alpha > \frac{1}{3}$. This means that for large $N$ the set $J^\alpha_{C,N}$, in
which the value of $\Phi$ approaches the mean as $N\to\infty$, takes up almost the whole original sphere $S^3$. In fact, the only places where 
we cannot prove the convergence to $\Mean{\Phi}$ are points lying asymptotically within the distance $O(N^{-1})$ from the nearest puncture.

Let us now focus at the immediate vicinity of a puncture, where Theorem \ref{th:Phi} fails to guarantee the convergence of the conformal factor to the mean value. We introduce a spherical coordinate system $(\lambda,\vartheta,\varphi)$ centered at a puncture $x_n$ and a new, rescaled  radial coordinate $\tilde\lambda = \frac{\alpha}{\alpha_n}\lambda$.
$\tilde\lambda$ is a zooming--in coordinate, because for a fixed $\tilde\lambda$ the corresponding $\lambda$
scales like $\alpha_n/\alpha = O(N^{-1})$. Now the conformal factor takes the form of
\bea
\Phi(x) &=& \frac{16\alpha}{3\pi} + \frac{\alpha_n}{\sin \frac{\tilde\lambda \alpha_n}{2 \alpha}} -\frac{16\alpha_n}{3\pi}
+ (\alpha - \alpha_n) G(\widetilde A_n,x) = \nonumber\\
&& = \alpha\left(\frac{16\alpha}{3\pi} +\frac{2}{\tilde\lambda} + \textrm{h.o.t.} \right). \label{eq:PhiZoom}
\eea
We have neglected here the terms proportional to $\frac{\alpha_n}{\alpha}$ and $G(\widetilde A_n,x)$ as they vanish
for large $N$ and expanded the  sin function in the denominator of the second term.
Note that the metric $q_0$ converges in the new coordinates to a flat one as $N\to\infty$:
\bea
q_0 = \left(\frac{\alpha_n}{\alpha}\right)^2\left(\dd\tilde\lambda^2 + \tilde\lambda^2 \dd\Omega^2\right) + \textrm{h.o.t.},\nonumber
\label{eq:q0Zoom}
\eea
again we have only kept the leading order terms in the expansion in powers of $N$.
Metric (\ref{eq:q0Zoom}), together with the  conformal factor (\ref{eq:PhiZoom}), corresponds to  a single Schwarzschild black 
hole in rescaled conformal coordinates. The horizon of this black hole lies at $\tilde\lambda = \frac{3\pi}{8}$, its ADM mass
is equal to $\frac{64}{3\pi}\alpha\alpha_n$ and its Schwarzschild radius, expressed in terms of $\lambda$, is equal
to $\lambda = \frac{128}{3\pi}\alpha\alpha_n$. 

The picture which emerges from these consideration can be summarized as follows: 
the metric in the region within the distance of $O(N^{-1})$ from the puncture indeed does not converge to the FLRW metric. This is because  asymptotically it lies within a finite number of Schwarzschild radii from the nearest black hole and its metric is strongly perturbed by  
its presence. This distortion persists for arbitrary large $N$. In fact, asymptotically the spatial metric tensor  takes the form of a single Schwarzschild black hole relatively isolated from the rest of the black holes:   recall that the distance from the nearest other BH is bounded from below by $\delta_{\min} = O(N^{-1/3})$. It is only in the region between the black holes  that the spatial metric tends to the homogeneous one. On the other hand, the volume of this faraway region grows with $N$ and asymptotically it takes
up almost all of the original $S^3$.

These results  may seem quite puzzling at first. One might naively expect that as we add more and more black holes into a compact manifold like $S^3$ the Universe will become more and more crowded as the black holes will occupy a larger and larger fraction of the 3--sphere, resulting in increasing distortion of the space between the black holes as well as of the black holes themselves, both due to the presence of near companions. In fact, we have just proven that it is quite the opposite: for large $N$ the Universe becomes effectively more and more \emph{empty}, because an increasing part of the 3--sphere is taken up by the  region far away from the black holes  which asymptotically lies at their infinity. An observer  in this region no longer feels the gravitational field of any single black hole, but rather the collective influence of all
of them. The metric in this area is very close to the closed FLRW metric despite the fact that no ordinary matter is present.

All these properties can be observed easily  in figure  \ref{fig:3configs}, where we have plotted the shape of the conformal factor in three
first configurations from the infinite sequence (see also figure 1 in \cite{Clifton:2012qh} for a similar plot for the six regular configurations).

\bfi
\bce
\includegraphics[width=0.3\textwidth]{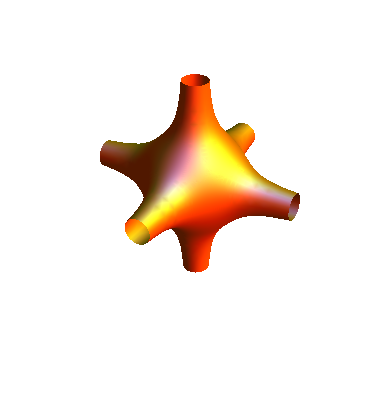}
\includegraphics[width=0.3\textwidth]{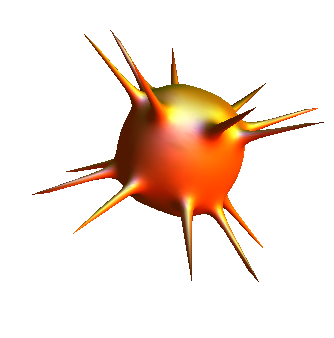}
\includegraphics[width=0.3\textwidth]{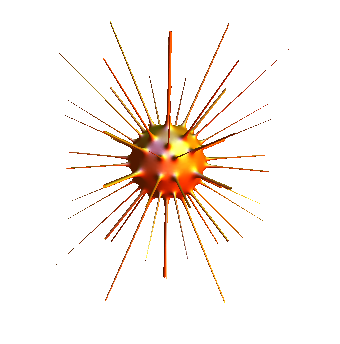}
\caption{Three--dimensional slices though the $H_8$, $H_{64}$ and $H_{216}$ configurations. The distance of every point on the surface the origin is proportional to $\Phi$ at the corresponding point in $S^3$. Note that as the number of punctures grows, the spikes, representing the singularities in $\Phi$ become more and more narrow. At the same time the space between the punctures becomes increasingly round.
\label{fig:3configs}}
\ece
\efi

Consider now the relative mass deficit $\sigma_M$. Theorem \ref{th:M}, in conjunction with the results from the previous subsection, yields
an estimate of the form of $\sigma_M \le O(N^{-1/3})$. Thus for very large $N$ we may simply neglect all the backreaction effects and plug into 
the Friedmann equation the energy density calculated in the most naive way, by adding up the ADM masses of all black holes and dividing the result by the effective volume of the Universe. This way we have proved that in $H_N$ the backreaction effect of
mass non--additivity vanishes in the continuum limit.
Both of these results agree with what has been observed for the regular configurations in \cite{Clifton:2012qh}.

\begin{remark}
As we mentioned in Section 2, in \cite{Clifton:2012qh, Bentivegna:2012ei, Clifton:2013jpa} the fitting and comparison with a homogeneous FLRW model was
done differently, by matching the length of the edge of a single big cell with the length of the edge from a corresponding lattice on a round 3--sphere. 
As we have shown, the value of the 3--metric on the edge converges to $\Mean\Phi^4 q_0$ relatively quickly because it lies as far as possible from the black holes. It follows easily that the values of $\eeff a$ and $\eeff M$ we obtain from such fitting converge to (\ref{eq:eeffa}) and (\ref{eq:Meff}) respectively. More generally, this convergence should hold for any reasonable definition of the effective parameters as long as the sequence in question has a continuum limit and the definition is sensitive only to the physical metric in the faraway region, avoiding the strong metric distortions near compact sources. This observation gives justification for the unphysical definition of $\eeff a$ via the averaging over the unphysical volume form: in the continuum limit it doesn't really matter which definition of the effective 3--metric we actually use to fit the FLRW model and calculate the effective mass, because as long as we only probe the metric in the faraway region they all yield asymptotically the same value for $N\to \infty$.
\end{remark}

\section{Numerical results}
\label{sec:numerical}

We will now present a few numerical results concerning the mass deficit and the convergence of the conformal factor towards the mean value. First let us consider the 6 regular configurations $R_N$ (see  \cite{Clifton:2012qh} for the details of their construction). Let $p$ be a vertex of the dual configuration, i.e. any point where the edges of a single lattice cell  intersect. These points lie at the largest possible distance from the black holes, so we may expect  a fast convergence of $\Phi$ at $p$ towards the mean value. The modified discrepancy of these configurations is difficult to evaluate even numerically, but can easily be estimated from above by the diameter of the corresponding regular tessellation of $S^3$ via Lemma \ref{lm:Ediam}. $\varepsilon$ is taken to be $0.1$ everywhere.

Figure \ref{fig:DeltaPhiplatonic} shows the convergence of $\sigma_\Phi$ defined by (\ref{eq:r}) towards 0. The bounds given by Theorem \ref{th:Phi} are plotted as well, separately for functions $U_{1,\varepsilon}$, $U_{2,\varepsilon}$ and $U_{3,\varepsilon}$. The estimate for $\sigma_\Phi$ is thus given by the largest of these functions, i.e. $U_{2,\varepsilon}$. As we can see, the actual value of  $\sigma_\Phi$  is 3 orders of magnitude smaller than the bounds. We note here that $\sigma_\Phi$ seems to be the same for the pairs $R_8$ and $R_{16}$ as well as $R_{120}$ and $R_{600}$. These configurations are dual in the sense that they arise from dual polytopes ($R_5$ and $R_{24}$ are self--dual). Indeed it is  relatively easy to show that for our choice of $p$   $\sigma_\Phi$   is the same
in dual configurations.
\bfi
\bce
\includegraphics[width=0.9\textwidth]{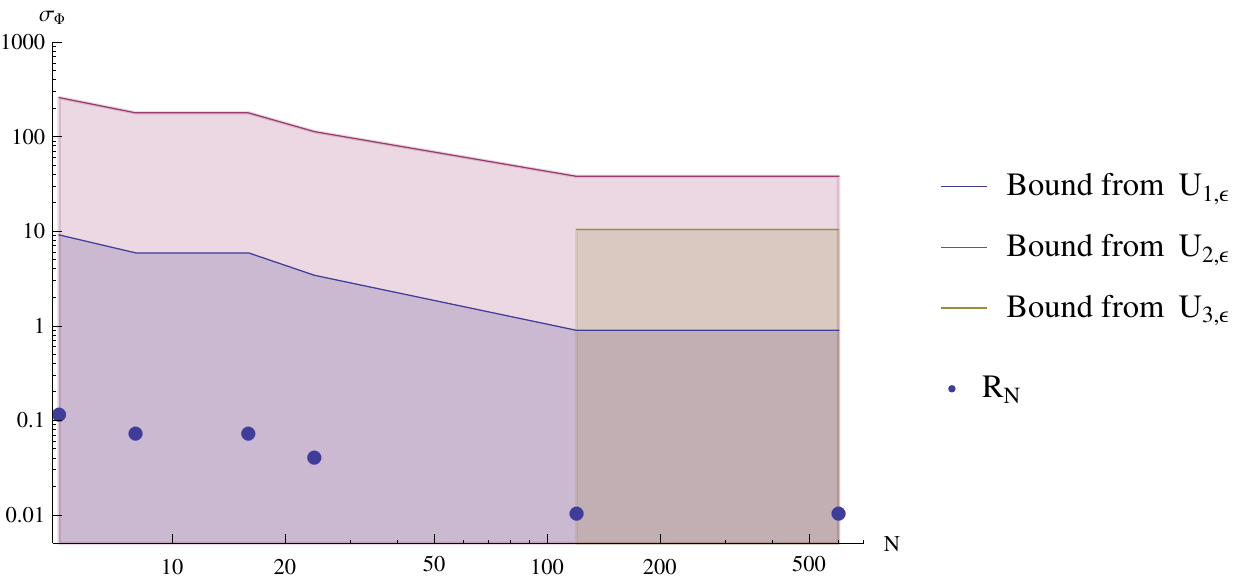}
\caption{Log-log plot of  $\sigma_\Phi$  for the 6 regular configurations and the bounds from Theorem \ref{th:Phi}}
\label{fig:DeltaPhiplatonic}
\ece
\efi
\bfi
\bce
\includegraphics[width=0.9\textwidth]{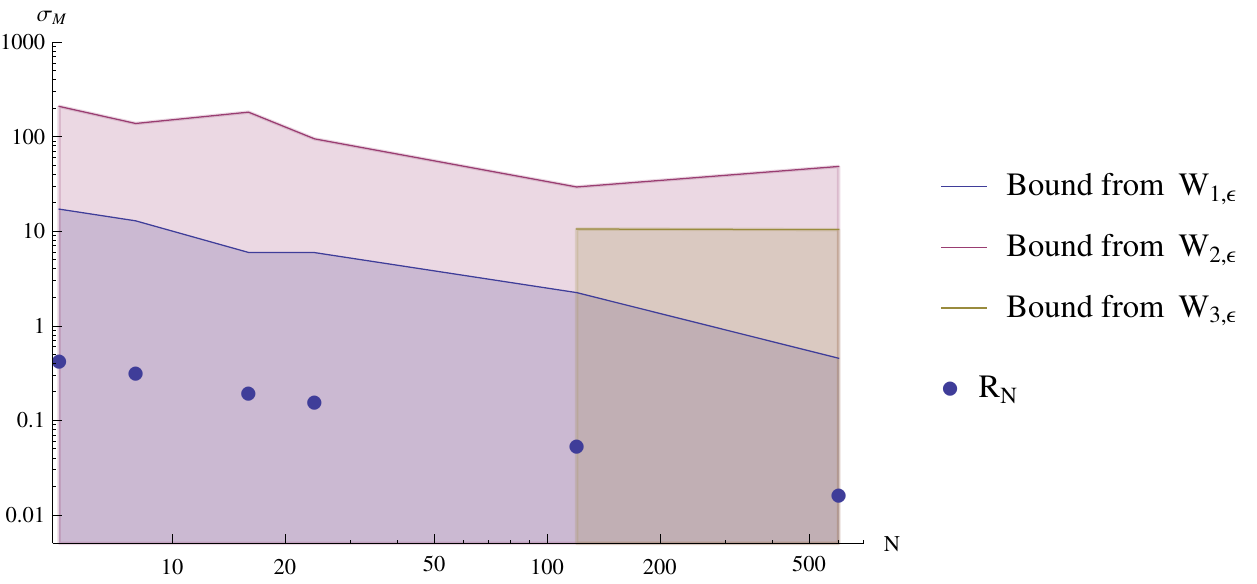}
\caption{Log-log plot of  $\sigma_M$ for the 6 regular configurations and the bounds from Theorem \ref{th:M}}
\label{fig:DeltaMplatonic}
\ece
\efi

In figure \ref{fig:DeltaMplatonic} we present the relative mass deficit for all 6 regular configurations. The value seems to
decrease with $N$ and once again it is orders of magnitude smaller than the bounds given by $W_{1,\varepsilon}$, $W_{2,\varepsilon}$
and $W_{3,\varepsilon}$. 
Note that for the first 4 configurations the conditions (\ref{eq:conditPhi}) and (\ref{eq:conditM})  are not satisfied. Thus $ U_{3,\varepsilon}$ and $W_{3,\varepsilon}$ are undefined and Theorems \ref{th:Phi} and \ref{th:M} do not apply to them. Nevertheless even in these cases the actual values of both $\sigma_M$ and $\sigma_\Phi$ are of orders of magnitude smaller than the  bounds by the remaining two functions.

In figures \ref{fig:DeltaPhiN8} and \ref{fig:DeltaMN8} we have plotted  $\sigma_\Phi$  and $\sigma_M$ for the first 6 $H_N$ configurations. Point $p$ was chosen to be one of the vertices of the initial tessellation into 8 big quasi-cubes, namely $(\frac{1}{2},\frac{1}{2},\frac{1}{2},\frac{1}{2})$ in terms of the Cartesian coordinates on $\Rr^4$. Conditions (\ref{eq:conditPhi}) and (\ref{eq:conditM}) are satisfied from the third element of the sequence up. We see again steady decrease of both $\sigma_\phi$ and $\sigma_M$. Linear dependence on a log--log plot suggests   a negative power convergence rate. 
\bfi
\bce
\includegraphics[width=0.9\textwidth]{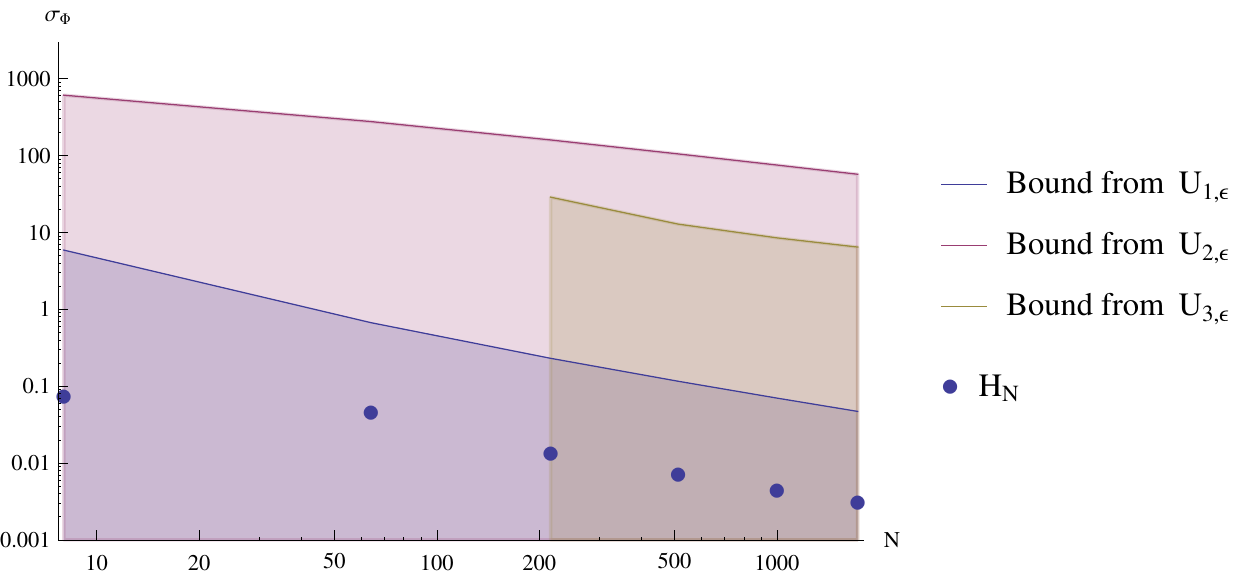}
\caption{Log-log plot of  $\sigma_\Phi$  for the  $H_N$ configurations and the bounds from Theorem \ref{th:Phi}}
\label{fig:DeltaPhiN8}
\ece
\efi
\bfi
\bce
\includegraphics[width=0.9\textwidth]{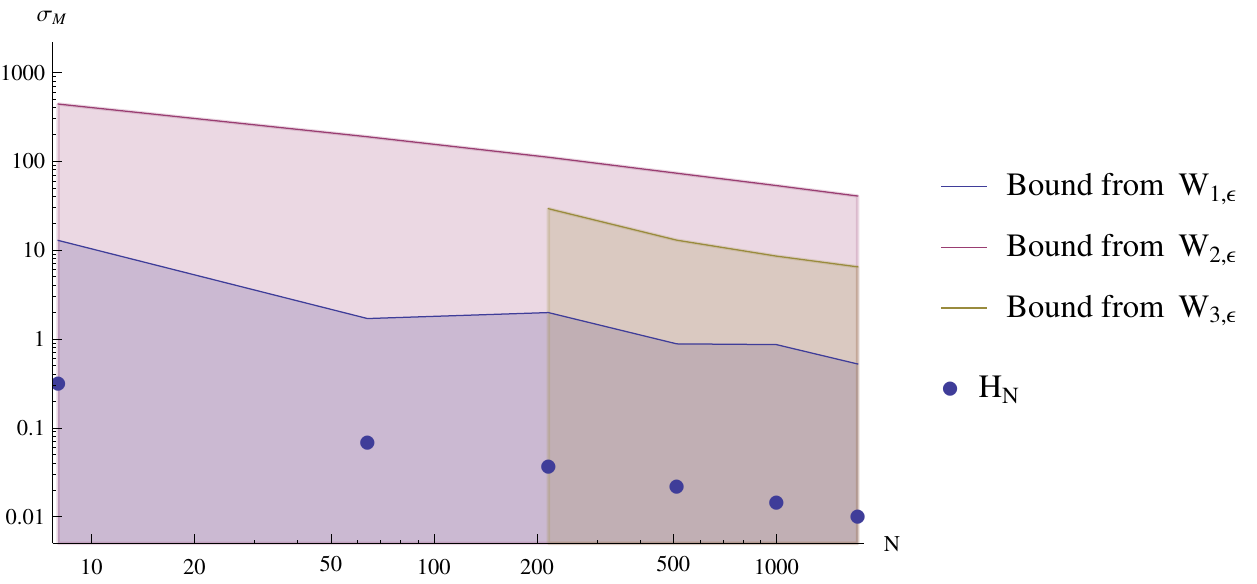}
\caption{Log-log plot of  $\sigma_M$  for the $H_N$ configurations and the bounds from Theorem \ref{th:M}}
\label{fig:DeltaMN8}
\ece
\efi
Obviously the inequalities in Theorems \ref{th:Phi} and \ref{th:M}  greatly overestimate both relative variation from the mean of $\Phi_p$ as well as
the relative mass difference. We may however ask if they correctly predict the exponent of the power convergence rate.

Recall that for the infinite sequence of configurations we have constructed we have $E = O(N^{-1/3})$, $\delta_{\min} = O(N^{-1/3})$, $\delta_{\max} = O(N^{-1/3})$. Moreover, since $p$ is located on the cell boundary in all configurations, 
$\lambda_{\min}=O(N^{-1/3})$ as we have shown before. If we substitute these relations to (\ref{eq:U}) and (\ref{eq:W}) we see that 
 $W_{\epsilon} = O(N^{-1/3})$ and $U_{\epsilon} = O(N^{-1/3})$ for all $\varepsilon$. It follows that the right hand side of both inequalities decreases like $O(N^{-1/3})$ as well.
 On the other hand
plots in figures \ref{fig:DeltaPhi23} and \ref{fig:DeltaMboth}, where we fit a straight line to the data using the least-squares method, suggest a faster  convergence with the exponent close to $-\frac{2}{3}$. 
\bfi
\bce
\includegraphics[width=0.9\textwidth]{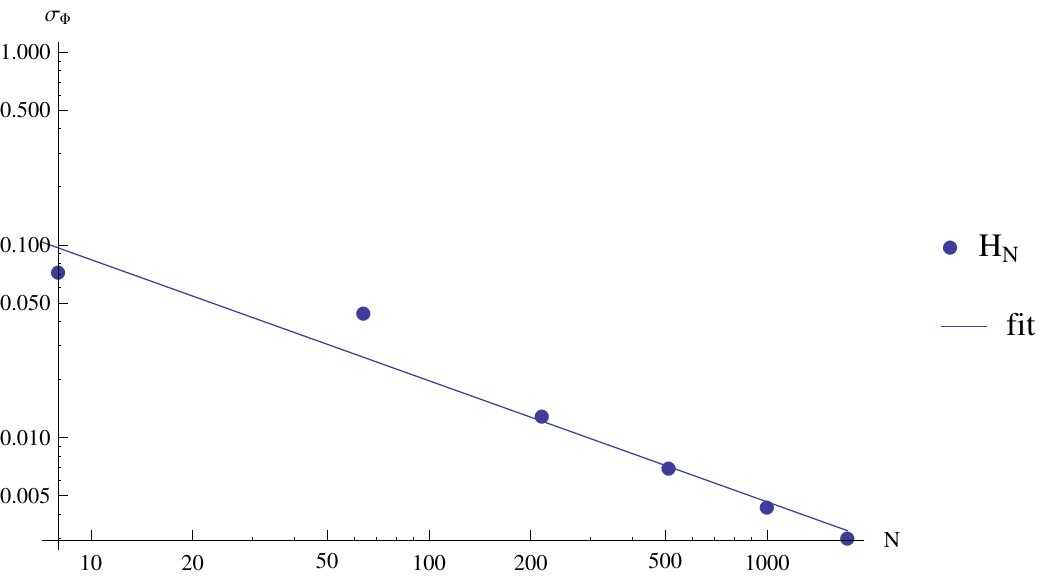}
\caption{$\sigma_\Phi$ for the $H_N$ configurations and the least squares fit. The slope of the straight line corresponds to $N^{-0.63}$}
\label{fig:DeltaPhi23}
\ece
\efi
\bfi
\bce
\includegraphics[width=0.9\textwidth]{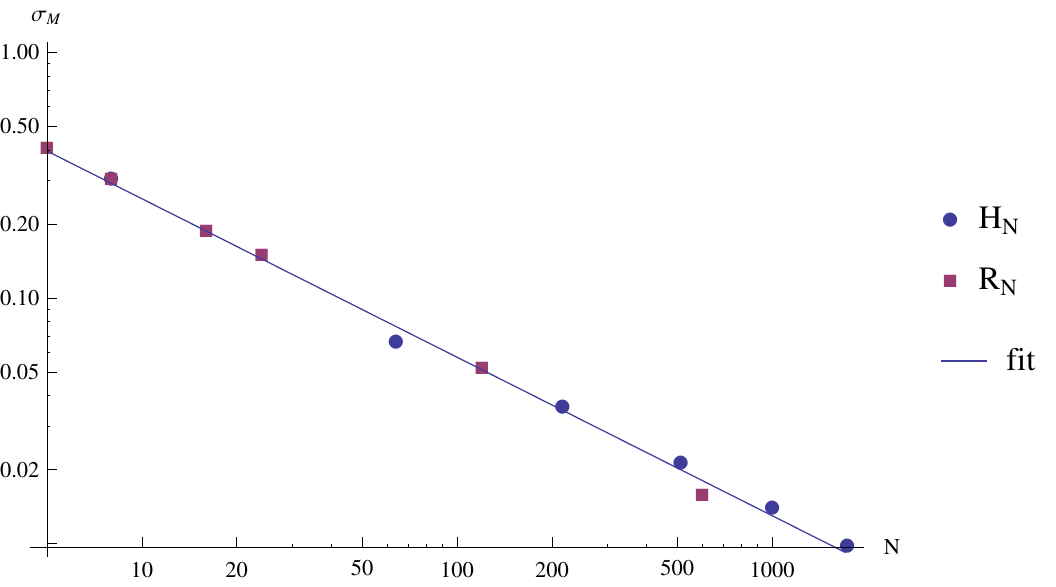}
\caption{$\sigma_M$ for both types of configurations and the least squares fit. The slope of the straight line  corresponds to $N^{-0.65}$}
\label{fig:DeltaMboth}
\ece
\efi

Finally let us look at the $H_N$ configurations with a slight modification. Namely, instead of placing a single black hole 
near the center of each quasi--cubic cell, we put there  a pair of them, with identical mass parameters and located $O(N^{-1})$ from each other.
This spacing means that $\lambda_{\min} = O(N^{-1})$ and the mass deficit is not guaranteed by Theorem \ref{th:M} to converge to $0$ even when $N\to \infty$. On the other hand \ref{th:Phi} still applies to points lying sufficiently far away from the centers of the small quasi--cubes and we expect that the metric tensor will tend to the FLRW form there. Indeed, as we see on figures \ref{fig:DeltaPhipairs} and \ref{fig:DeltaMpairs}, the  $\sigma_\Phi$  parameter
goes relatively quickly to 0 while the relative mass deficit \emph{grows} with $N$, approaching the value of approximately $1$.
\bfi
\bce
\includegraphics[width=0.9\textwidth]{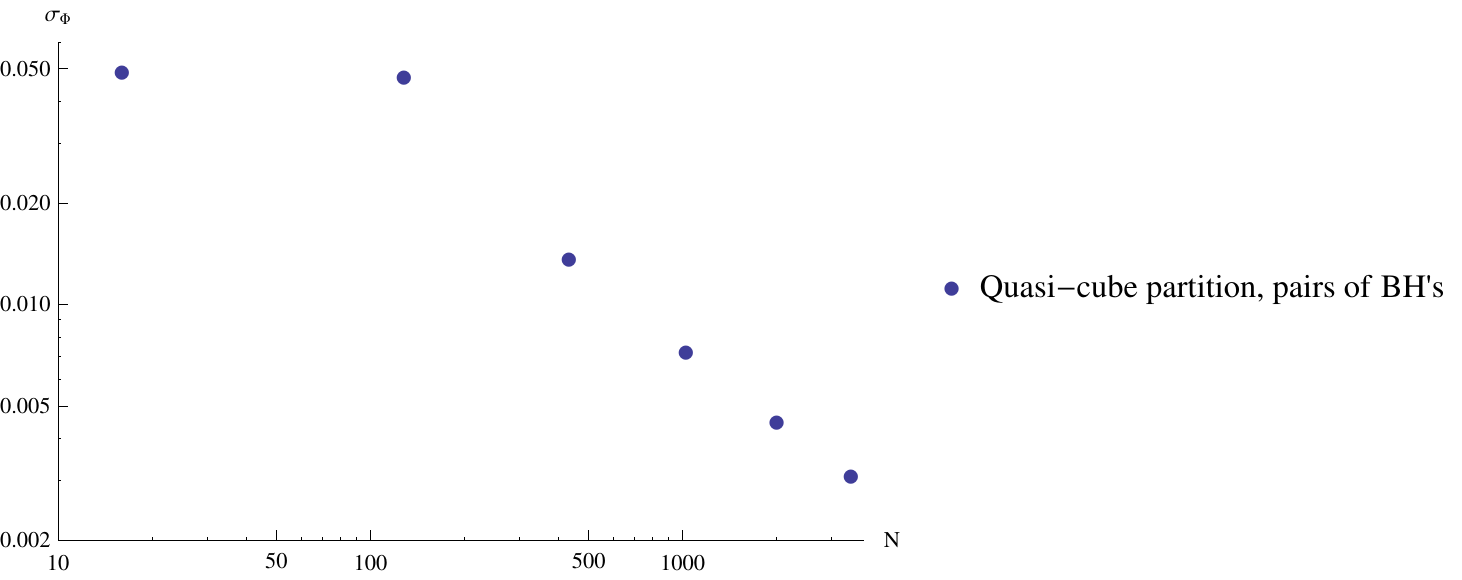}
\caption{$\sigma_\Phi$ for evenly distributed close pairs  of black holes}
\label{fig:DeltaPhipairs}
\includegraphics[width=0.9\textwidth]{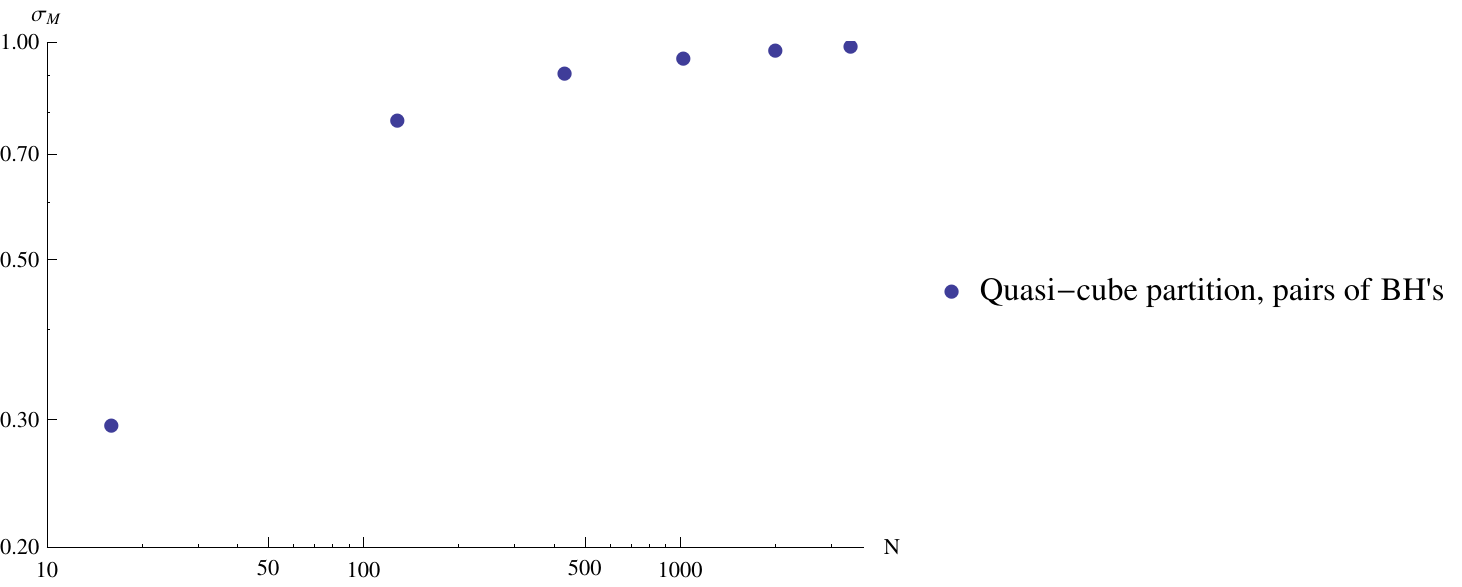}
\caption{$\sigma_M$ for evenly distributed close pairs of black holes}
\label{fig:DeltaMpairs}
\ece
\efi

This example shows explicitly that the existence of the continuum limit does not imply automatically the vanishing of backreaction when $N\to \infty$. Indeed, black hole clustering in this example results in big relativistic corrections to the total mass of each black hole pair even as we pass to the continuum limit. Even though the total masses of the pairs measured far away are additive in the continuum limit, the masses of their constituents are not. This highlights the necessity of a condition limiting the black hole clustering in the inequality for backreaction: backreaction simply does not go to 0 if we allow the black holes to come too close to each other.

\section{Summary and conclusions}
We have discussed  the time--symmetric $S^3$ vacuum initial data with black holes as the only source of the gravitational field, focusing on their properties  when the number of black holes is large. The initial data is constructed via the Lichnerowicz--York method
from the metric of a 3--sphere by multiplication with an explicitly known conformal factor. The factor function contains a fixed number
of singularities (punctures), with arbitrary positive leading coefficients, called mass parameters, and  located at arbitrarily chosen points of $S^3$. These singularities correspond to  throats leading via a minimal surface to an
asymptotically flat region, resembling strongly  Schwarzschild black holes. Apart from the black holes the model contains  no other types of matter.

 We have proved that as the number of black holes $N$ goes to infinity  the spatial metric tensor in this model approaches  asymptotically 
the uniform metric tensor of a closed FLRW model everywhere except the regions located in the immediate vicinity of the  black holes, provided that the black holes cover the uniformly the 3--sphere. This happens despite the fact that there is no continuous, constant matter density in this model. On the other hand, in the regions contained within a finite number of Schwarzschild radii  from a black hole  the metric tensor is strongly affected by its presence and  approaches asymptotically the metric of an isolated Schwarzschild black hole rather than an FLRW model. We have proved however that as we increase $N$ the total size of this near--black--hole region,
measured with respect to the 'round' metric,  decreases quickly and asymptotically almost all of the 3--sphere becomes FLRW. We thus establish the existence of the continuum limit for these models  as a kind of pointwise limit with an excluded region.

The main tool we used was an inequality providing the upper bound for the relative difference between the conformal factor of the initial data at a point $p$ and the mean conformal factor. Since the spatial metric $q$ is conformally equivalent to $S^3$, the inequality measures the difference between $q$ and the homogeneous, averaged metric. The upper bound is a function of the distance from the nearest puncture $\lambda_{\min}$ and the modified cap discrepancy $E$, a quantity characterizing  the distribution of the black holes. $E$ measures the difference between the standard measure of volume on $S^3$ and the discretized measure centered at the punctures. It is
small if and only if the distribution of black hole mass is uniform around the sphere, i.e. the black holes do not gather 
in some regions of $S^3$ leaving other ones empty. The bounding function is  non--increasing in $\lambda_{\min}$ and
non--decreasing in $E$ and it goes to 0 if $E$ goes to $0$ sufficiently faster than $\lambda_{\min}$. 

We have also proved another inequality bounding the difference between the effective mass of the model, calculated from the 
mean conformal factor via the Einstein equations, and the sum of the ADM masses of the individual black holes. Its vanishing for large $N$ would mean 
that asymptotically the non--linearity of general relativity plays no role and we may assume simple mass additivity in these models for sufficiently large $N$.
The estimate we have proved is again a function of the modified cap discrepancy $E$, the ratio of the largest mass parameter to the sum of all 
mass parameters, the largest distance from a puncture and the nearest other one  and finally the smallest distance between any of the two punctures. The dependence is non--increasing in the first three variables, but non--decreasing in the last one. 
The inequality allows to prove that the two total masses coincide asymptotically if the black holes do not cluster, i.e. the distance
between the two nearest of them does not decrease too fast as we increase $N$. 

The estimates in the inequalities above depend on  certain parameters of the microscopic matter distribution. It is however  not clear 
a priori the inequalities are indeed useful, i.e. if there exist configurations for which the their right hand sides can be made as small as possible for large $N$. Therefore we also presented an explicit construction of an infinite sequence of configurations in which both the continuum limit and mass additivity are attained
in the sense outlined above. The construction is based on partitioning the 3--sphere into a number of quasi--cubic domains and 
placing a black hole into the center of each of them. 

We finally presented the effect of numerical study of the configurations constructed above as well as the regular ones. It turned
out that the inequalities in question grossly overestimate both the deviation of the conformal factor from the mean value and the relative mass deficit.
We have also presented an example of a sequence of configurations made of close pairs of black holes for which the continuum limit seems to
be attained, but there remains a residual difference between the effective total mass and the sum of the black hole masses even
as $N\to \infty$.

The most important general lessons to be learned from the examples we have discussed  can be summarized as follows:
\begin{enumerate}
\item Sequences of solutions of Einstein equations with discretized matter sources may have a  continuum limit in which
the metric tends pointwise to the metric of corresponding to a continuous matter distribution as
the number of sources goes to infinity.
\item For very compact sources, such as black holes, in which the gravitation field is very strong, the continuum limit is not attained everywhere. The typical scenario looks as follows: for any finite number of objects there exist  regions very close to the sources where the metric is very 
strongly distorted by their presence and therefore never approaches the continuous one. Nevertheless these regions  shrink quickly as
the number of sources goes to infinity and eventually most of the spacetime consists of the far--away region where the influence 
of any single source is negligible. In this region the gravitational field is determined by the collective influence of all discrete sources and the metric tensor tends to the one with the continuous matter source provided that the distribution of masses agrees with the corresponding continuous energy distribution.
\item The mere existence of the continuous limit does not guarantee that the effects of backreaction, arising due to the non--linearity of GR, are negligible. It is necessary
to assume something more about the microscopic distribution of sources to make sure that the effective stress--energy tensor of the continuous distribution asymptotically agrees with the one we obtain from simply adding up all the microscopic contributions. In our
case we needed to assume that the distance between the two nearest black holes does not go to 0 too fast.
\item Whilst the backreaction effects in GR may be very difficult to evaluate analytically, it is still possible to prove rigorously inequalities
estimating them given an appropriate set of parameters characterizing the microscopic distribution of matter.
\end{enumerate}

The numerical results from Section \ref{sec:numerical} suggest that  the estimates we have proven, while useful, are far from being optimal. In particular, the dependence on the smallest distance between the black holes seems quite restrictive: normally we would expect that a single pair of black holes forming a bound system does not spoil the additivity for large $N$. Therefore there is probably a large room for improvement in
Theorems \ref{th:Phi} and \ref{th:M}.
Another interesting direction for future research would be the  generalization of the results  to more complicated initial data, with arbitrary distribution of matter and no assumptions about the symmetry of the coarse--grained metric. An even more interesting and  difficult  generalization  would include the full dynamics of the system and provide similar estimates for the rest of the
backreaction terms, i.e. the
additional pressures in the effective stress--energy tensor arising from the discrete nature of the sources.

\section*{Acknowledgements}
The author would like to thank the Max Planck Institute for Gravitational Physics--Albert Einstein Institute in Potsdam for hospitality. The work was supported by the project \emph{
``The role of small-scale inhomogeneities in general relativity and cosmology''} (HOMING PLUS/2012-5/4), realized within the Homing Plus programme of Foundation for Polish Science, co--financed by the European Union from the Regional Development Fund.

\appendix
\section{Proof of Lemma \ref{lm:Hlawka-Zaremba}} \label{app:Hlawka-Zaremba}

 We assume that $0<\lambda_1\le\lambda_2\le\dots\lambda_N\le\pi$, relabeling the punctures if needed. Function $\disc(\lambda)$ is discontinuous
at $\lambda_n$ and continuous everywhere else. Thus 
we can perform integration by parts on the right hand side on all intervals $(0,\lambda_1), (\lambda_1,\lambda_2),\dots, (\lambda_N,\pi)$. The derivative of
$\disc(\lambda)$ inside these intervals is equal to the derivative of $\Xi(\lambda)$, which is $\sin^2\lambda$. We get:
\bea
\textrm{rhs} &=& \frac{2}{\pi}\int_0^{\lambda_1} f(\lambda)\,\sin^2\lambda\,\dd\lambda -
 \disc(\lambda)\,f(\lambda) \Big|_{0}^{\lambda_1} \nonumber\\
 &&+ \frac{2}{\pi}\int_{\lambda_1}^{\lambda_2} f(\lambda)\,\sin^2\lambda\,\dd\lambda -
 \disc(\lambda)\,f(\lambda) \Big|_{\lambda_1}^{\lambda_2} + \cdots\ \nonumber\\
 &&+ \frac{2}{\pi}\int_{\lambda_N}^{\pi} f(\lambda)\,\sin^2\lambda\,\dd\lambda -
 \disc(\lambda)\,f(\lambda) \Big|_{\lambda_N}^{\pi}\nonumber
\eea
The boundary term at $\pi$ vanishes because $\disc(\lambda)$ vanishes there and $f$ is regular. Since $\disc(\lambda)$ is of the order of $\lambda^3$ near zero, while $f$ is by assumption of the order higher than ${-3}$, their product vanishes in the limit of $\lambda\to 0$ as well. The remaining boundary terms at $\lambda_n$ do not cancel each other out because of the 
discontinuity of $\disc(\lambda_n)$ at these points. A pair at each $\lambda_n$ contributes $-\alpha_n/\alpha$, while the remaining integrals add up to the integral over
the whole $(0,\pi)$:
\bea
\textrm{rhs} &=& \frac{2}{\pi}\int_0^{\pi} f(\lambda)\,\sin^2\lambda\,\dd\lambda - \sum_{n=1}^N \frac{\alpha_n}{\alpha}\,f(\lambda_n) = \textrm{lhs}.\nonumber
\eea
\qed

\section{Relation between the distances and volumes on $\Rr^3$ and $S^3$ induced by projection $\Psi$} \label{app:distances}

Fix mapping (\ref{eq:Psiproj}) in the big quasi--cube around $(1,0,0,0)$ and introduce a spherical coordinate system $(\lambda,\vartheta,\varphi)$  with the $(1,0,0,0)$ as the
pole $\lambda = 0$. The mapping now takes the form of
\bea
\Psi: S^3 \ni (\lambda,\vartheta,\varphi) \mapsto (r(\lambda) = \tan \lambda,\vartheta,\varphi)\nonumber
\eea
with $(r,\vartheta,\varphi)$ being the standard spherical coordinate system on $\Rr^3$.
Now the pullback of the flat metric $h$ from $\Rr^3$ to $S^3$ takes the form of
\bea
\Psi^* h = \cos^{-4}\lambda \,\dd\lambda^2 + \tan^2\lambda(\dd\vartheta^2 + \sin^2\vartheta\,\dd\varphi^2). \nonumber
\eea
Consider the coframe $e^1 = \dd\lambda$, $e^2 = \sin\lambda\,\dd\vartheta$, $e^3 = \sin\lambda\sin\vartheta\,\dd\varphi$,
orthonormal with respect to $q_0$:
\bea
q_0 = e^1\otimes e^1 + e^2\otimes e^2 + e^3 \otimes e^3. \label{eq:q0e}
\eea 
The pullback can be expressed in terms of the coframe:
\bea
\Psi^*h = \cos^{-4}\lambda \,e^1\otimes e^1 + \cos^{-2}\lambda\,\left(e^2\otimes e^2 + e^3 \otimes e^3\right). \label{eq:he}
\eea 
 Note that $\lambda$ in a single quasi--cube satisfies $0 \le \lambda \le \frac{\pi}{3}$,  with the upper bound attained at the 8 vertices of the quasi--cube at $\left(\frac{1}{2},\pm \frac{1}{2},\pm \frac{1}{2},\pm \frac{1}{2}\right)$. Now comparing (\ref{eq:q0e})
  with (\ref{eq:he}) we see that the largest difference between $q_0(X,X)$ and $\Psi^*h(X,X)$ is attained
  at the vertices and when the vector $X$ in question is proportional $\partial_\lambda$. In this situation $  q_0(X,X)  = \frac{1}{16} \,\Psi^*h(X,X)$. 
  On the other hand the metrics coincide at $\lambda=0$. Altogether,  we have
  \bea
  \frac{1}{16} \,\Psi^*h(X,X) \le q_0(X,X) \le \Psi^*h(X,X)\nonumber
  \eea 
for all points and tangent vectors inside the big quasi-cube.
This inequality carries over to the lengths of the curves: for any curve $\gamma(t)$ we have
\bea
 \frac{1}{4}\int_\gamma \sqrt{\Psi^*h(\dot\gamma,\dot\gamma)} \dd t \le \int_\gamma \sqrt{q_0(\dot\gamma,\dot\gamma)} \dd t
\le \int_\gamma \sqrt{\Psi^*h(\dot\gamma,\dot\gamma)} \dd t\nonumber
\eea
 
Recall that the distances $D(\cdot,\cdot)$ and $\Lambda(\cdot,\cdot)$ is the length of the shortest curve between two points
measured with $h$ and $q_0$ respectively. For a pullback of a straight line $\gamma_1(t)$, minimizing the length measured by $\Psi^*h$ between two points $x$ and $y$, and great circle $\gamma_2(t)$, minimizing
the length measured by $q_0$  we have
\bea
\frac{1}{4}D(x,y) &=& \frac{1}{4}\int_{\gamma_1} \sqrt{\Psi^*h(\dot\gamma_1,\dot\gamma_1)} \,\dd t \le \frac{1}{4}\int_{\gamma_2}\sqrt{\Psi^*h(\dot\gamma_2,\dot\gamma_2)} \,\dd t \nonumber\\
&\le& \int_{\gamma_2} \sqrt{q_0(\dot\gamma_2,\dot\gamma_2)} \,\dd t = \Lambda(x,y).\nonumber
\eea
Swapping the role of $\gamma_1(t)$ and $\gamma_2(t)$ in the reasoning above we prove easily that $\Lambda(x,y) \le D(x,y)$
and hence (\ref{eq:DLD}).

Finally let us estimate the spherical volumes. Let $\kappa = \dd x^2\wedge \dd x^3 \wedge \dd x^4$ be the standard volume form on $\Rr^3$. Note that $\eta = e^1\wedge e^2\wedge e^3$ and $\Psi^*\kappa = \cos^{-3}\lambda\,e^1\wedge e^2\wedge e^3$.
A similar line of reasoning as before shows that
\bea
\frac{1}{8}\Psi^*\kappa(X,Y,Z) \le \eta(X,Y,Z) \le \Psi^*\kappa(X,Y,Z) . \nonumber
\eea
for any properly oriented triple of vectors $X,Y,Z$. After integrating over any domain $C$ of proper orientation we obtain upper and lower bounds for the spherical volume  in terms of the "flat" volume:
\bea
\frac{1}{8}\vol_{\Rr^3}(C) \le \vol(C)  \le \vol _{\Rr^3}(C) \label{eq:volumes}. 
\eea

\section*{References}
\bibliographystyle{iopart-num}
\bibliography{s3-limit}

 \end{document}